\newcommand{\R}{\mathbb{R}}
\newcommand{\BigO}[1]{\ensuremath{\mathcal{O}\left(#1\right)}}                  
\newcommand{\BigOm}[1]{\ensuremath{\Omega\left(#1\right)}}                      
\newcommand{\poly}{\mathrm{poly}}
\newcommand{\vect}[1]{\ensuremath{\mathbf{#1}}}                                 
\newcommand{\KL}[2]{\ensuremath{\mathbb{KL} \left(#1 \middle\Vert #2 \right)}}   
\newcommand{\Exp}[2][]{\ensuremath{\mathbb{E}_{#1}\left[#2\right]}}                
\newcommand{\Var}[2][]{\ensuremath{\mathrm{Var}_{#1}\left[#2\right]}}                
\newcommand{\Norm}[1]{\ensuremath{\left\lVert #1 \right\rVert}}                  
\newcommand{\NormI}[1]{\ensuremath{\left\lVert #1 \right\rVert}_1}               
\newcommand{\NormII}[1]{\ensuremath{\left\lVert #1 \right\rVert}_2}              
\newcommand{\NormInfty}[1]{\ensuremath{\left\lVert #1 \right\rVert_{\infty}}}    
\newcommand{\NormMax}[1]{\ensuremath{\left\lvert #1 \right\rvert_{\mathrm{max}}}}    
\newcommand{\parg}{\makebox[1ex]{$\mathbf{\cdot}$}}                              
\newcommand{\Tp}[1]{{#1}^{\top}}												    
\newcommand{\Tpn}[1]{{#1}^{-\top}}	                                             
\newcommand{\mquad}{\kern-1em}													
\newcommand{\Ineq}[2][]{\overset{#1}{#2}}                                         
\newcommand{\matrx}[1]{\begin{bmatrix}#1\end{bmatrix}}                           
\newcommand{\InNorm}[1]{{\left\vert\kern-0.2ex\left\vert\kern-0.2ex\left\vert #1 
    \right\vert\kern-0.2ex\right\vert\kern-0.2ex\right\vert}}                    
\newcommand{\InNormII}[1]{{\left\vert\kern-0.2ex\left\vert\kern-0.2ex\left\vert #1 
    \right\vert\kern-0.2ex\right\vert\kern-0.2ex\right\vert}_2}                    
\newcommand{\InNormInfty}[1]{{\left\vert\kern-0.2ex\left\vert\kern-0.2ex\left\vert #1 
    \right\vert\kern-0.2ex\right\vert\kern-0.2ex\right\vert}_{\infty}}           
\newcommand{\Abs}[1]{\ensuremath{\left \lvert #1 \right \rvert}}                 
\newcommand{\Prob}[2][]{\ensuremath{\mathrm{Pr}_{#1}\left\{ #2 \right\}}}        
\DeclarePairedDelimiterX{\Inner}[2]{\langle}{\rangle}{#1, #2}                    
\newcommand{\MI}{\mathnormal{I}}                                                     
\newcommand{\what}[1]{\widehat{#1}}                                              
\newcommand{\defeq}{\overset{\mathrm{def}}{=}}                                                      
\newcommand{\Set}[1]{\{#1\}}                                                     
\newcommand{\emset}{\varnothing}                                                 
\DeclareMathOperator*{\union}{\cup}
\DeclareMathOperator*{\intersection}{\cap}
\newcommand{\floor}[1]{\left\lfloor#1\right\rfloor}
\newcommand{\ceil}[1]{\left\lceil#1\right\rceil}
\definecolor{gray}{rgb}{0.7,0.7,0.7}
\DeclareMathOperator*{\argmin}{argmin}
\newtheorem{proposition}{Proposition}
\newtheorem{assumption}{Assumption}
\newtheorem{problem}{Problem}
\newtheorem{lemma}{Lemma}
\newtheorem{theorem}{Theorem}
\newtheorem{remark}{Remark}
\newtheorem{corollary}{Corollary}
\theoremstyle{definition}
\def\independenT#1#2{\mathrel{\rlap{$#1#2$}\mkern2mu{#1#2}}}
\DeclareMathOperator{\independent}{\protect\mathpalette{\protect\independenT}{\perp}}  
\DeclareMathOperator{\notindependent}{\centernot{\independent}}                        
\newcommand{\Diag}{\vect{Diag}}
\newcommand{\trace}{\mathrm{tr}}
\newcommand{\Erdos}{Erdős} 
\newcommand{\Renyi}{Rényi }
\newcommand{\B}[1]{B^{\scriptscriptstyle (#1)}}
\newcommand{\Di}[1]{D^{\scriptscriptstyle (#1)}}
\newcommand{\Bt}[1]{\widetilde{B}^{\scriptscriptstyle (#1)}}
\newcommand{\Dt}[1]{\widetilde{D}^{\scriptscriptstyle (#1)}}
\newcommand{\wtl}[1]{\widetilde{#1}}
\newcommand{\G}[1]{G^{\scriptscriptstyle (#1)}}
\newcommand{\Gt}[1]{\widetilde{G}^{\scriptscriptstyle (#1)}}
\newcommand{\var}{\sigma^2}
\newcommand{\vart}{\widetilde{\sigma}^2}
\newcommand{\mi}{-\mathrm{i}}
\newcommand{\D}{\Delta}
\newcommand{\DB}{{\Delta_{B}}}
\newcommand{\DO}{{\Delta_{\Omega}}}
\newcommand{\DOh}{{\widehat{\Delta}_{\Omega}}}
\newcommand{\DOt}{{\Delta_{\Omega}^*}}
\newcommand{\X}[1]{X^{{\scriptscriptstyle (#1)}}}
\newcommand{\Om}[1]{\Omega^{{\scriptscriptstyle (#1)}}}
\newcommand{\Sm}[1]{\Sigma^{{\scriptscriptstyle (#1)}}}
\newcommand{\Smh}[1]{\widehat{\Sigma}^{{\scriptscriptstyle #1}}}
\newcommand{\smt}[1]{\widetilde{\sigma}^{{\scriptscriptstyle (#1)}}}
\newcommand{\Par}[1]{\pi^{\scriptscriptstyle (#1)}}
\newcommand{\nPar}{\pi}   
\newcommand{\Chi}[1]{\phi^{\scriptscriptstyle (#1)}}
\newcommand{\corr}[1]{\mathrm{corr}^{\scriptscriptstyle(#1)}} 
\newcommand{\supp}{\mathrm{supp}}
\newcommand{\Anc}{\mathcal{A}}                                
\newcommand{\Anct}{\widetilde{\mathcal{A}}}                                
\newcommand{\lAnc}[1]{{\mathcal{A}}^{\scriptscriptstyle(#1)}} 
\newcommand{\computeOrder}{\textproc{ComputeOrder}}
\newcommand{\orientEdges}{\textproc{orientEdges}}
\newcommand{\prune}{\textproc{prune}}
\newcommand{\vectorize}{\mathbf{\mathrm{vec}}}
\newcommand{\perr}{p_{\mathrm{err}}}
\newcommand{\Gf}{\mathcal{G}}  
\newcommand{\Gfr}{\widetilde{\mathcal{G}}}  
\newcommand{\DGf}{\mathcal{G}_{\Delta}}  
\newcommand{\DGfr}{\widetilde{\mathcal{G}}_{\Delta}}  
\newcommand{\Dec}{\zeta}
\newcommand{\Pf}{\mathcal{P}}  
\newcommand{\Xf}{\mathfrak{X}} 
\newcommand{\bh}{\widehat{b}}
\newcommand{\betah}{\widehat{\beta}}
\newcommand{\kdmin}{K^{\mathrm{d}}_{\mathrm{min}}}
\newcommand{\komax}{K^{\mathrm{o}}_{\mathrm{max}}}
\newcommand{\kdminS}{K^{\mathrm{d,S}}_{\mathrm{min}}}
\newcommand{\komaxS}{K^{\mathrm{o,S}}_{\mathrm{max}}}
\newcommand{\Sc}{S^c}
\newcommand{\xb}{\bar{x}}
\newcommand{\eigmax}{\lambda_{\mathrm{max}}}
\newcommand{\eigmin}{\lambda_{\mathrm{min}}}
\newcommand{\skel}{\mathrm{skel}}
\newcommand{\dt}{d'}
\title{Direct Learning with Guarantees of the Difference DAG Between Structural Equation Models}
\author{%
Asish Ghoshal\thanks{Correspondence to: \texttt{aghoshal@purdue.edu}}
\qquad
Kevin Bello 
\qquad Jean Honorio\\
Department of Computer Science\\
Purdue University\\
West Lafayette, IN 47906, USA
}
\begin{document}    
\maketitle

\begin{abstract}
Discovering cause-effect relationships between variables from observational data is a fundamental challenge in many scientific disciplines.
However, in many situations it is desirable to directly estimate the change in causal relationships across two different conditions, e.g., estimating the change in genetic expression across healthy and diseased subjects can help isolate genetic factors behind the disease.
This paper focuses on the problem of  \textit{directly} estimating the structural difference between two \textit{structural equation models} (SEMs), having the same topological ordering,  given two sets of samples drawn from the individual SEMs. 
We present an \textit{principled} algorithm that can recover the \textit{difference SEM} in $\BigO{d^2 \log p}$ samples, where $d$ is related to the number of edges in the difference SEM of $p$ nodes. 
We also study the \textit{fundamental limits} and show that \textit{any} method requires at least $\Omega(\dt \log (\nicefrac{p}{\dt}))$ samples to learn difference SEMs with at most $\dt$ parents per node.
Finally, we validate our theoretical results with synthetic experiments and show that our method outperforms the state-of-the-art.
Moreover, we show the usefulness of our method by using data from the medical domain.
\end{abstract}

\section{Introduction and Related Work}
Discovering causal relationships from observational studies is of tremendous importance in many scientific disciplines. 
In Pearl's framework of causality \cite{pearl2009causality}, such cause-effect relationships are modeled using directed acyclic graphs (DAGs).
One of the central problems in causal inference is then to recover a DAG of cause-effect relationships over variables of interest, given observations of the variables.
It is well known that the number of samples needed to recover a DAG over $p$ variables and maximum number of neighbors $d$ grows as $\Theta(\poly(d) \log p)$ \cite{ghoshal2017information, ghoshal2018learning}.
Therefore, the presence of hub nodes makes it especially challenging to recover the DAG from a few samples.
In many situations however, the \textit{changes in causal structures} across two different settings is of primary interest. 
For instance, the changes in structure of the gene regulatory network between cancerous and healthy individuals might help shed light on the genetic causes behind the particular cancer. 
In this case, estimating the individual networks over healthy and cancerous subjects is not sample-optimal since many background genes do not change across the subjects
or even across distant species \cite{tanay2005conservation}.
While the individual networks might be \textit{dense}, the difference between them might be \textit{sparse}.

In this paper, we focus on the problem of learning the structural differences between two linear structural equation models (SEMs) (or Bayesian networks) given samples drawn from each of the model. 
We assume that the (unknown) topological ordering between the two SEMs remains consistent, i.e., there are no edge reversals. 
This is a reasonable assumption in many settings and have also been considered by prior work \cite{wang_direct_2018}. 
For instance, edges representing genetic interactions may appear or disappear or change weights, but generally do not change directions \citep{belyaeva2020dci}.
Furthermore, in the multi-task learning literature it is often assumed that
the noise variances across different tasks are the same (c.f. \cite{jalali_dirty_2010}). 
Our primary goal in this paper is to develop an algorithm that \textit{directly} learns the difference DAG by using a number of samples that depends only on the \textit{sparsity} of the difference DAG. 
This is a much more challenging problem than structure learning of Bayesian networks since in the latter case when the causal ordering is known, structure learning boils down to regressing each variable against all other variables that come before it in the topological order and picking out the non-zero coefficients.
However, the fact that the individual DAGs are \textit{dense}, rules out performing regressions in the individual model and then comparing invariances of the coefficients across the two models.

The problem of learning the difference between undirected graphs (or Markov random fields) has received much more attention than the directed case. 
For instance \cite{zhao_direct_2014, liu2017support, yuan_differential_2017, fazayeli_generalized_2016} develop algorithms for estimating the difference between Markov random fields and Ising models with finite sample guarantees.
Another closely related problem is estimating invariances between causal structure across multiple environments \cite{peters2016causal}. 
However, this is desirable when the \emph{common structure} is expected to be sparse across environments, as opposed to our setting where the \textit{difference} is expected to be sparse.

The problem of estimating the difference between DAGs has been previously considered by \cite{wang_direct_2018}, who developed a PC-style algorithm \cite{spirtes2000causation}, which they call \emph{DCI}, for learning the difference between the two DAGs by testing for invariances between regression coefficients and noise variances between the two models. 
However, sample complexity guarantees are hard to obtain for their method due to the use of many approximate asymptotic distributions of test statistics. 
Since the primary motivation behind directly estimating the difference between two DAGs is sample-efficiency, a lack of finite sample guarantees is a significant shortcoming. 
In contrast, our algorithm works by repeatedly eliminating vertices and re-estimating the difference of precision matrix over the remaining vertices. 
Thereby, we are able to leverage existing algorithms for computing the difference of precision matrix to obtain finite sample guarantees for our method.  Furthermore, the DCI algorithm estimates regression coefficients (and noise variances) in the individual DAGs,  while our method never estimates weights or noise variances of individual SEMs. Consider the example given in Figure \ref{fig:example} where the difference DAG contains
only one edge $X_4 \rightarrow X_2$. In order to prune the edges $X_1 - X_4$ and $X_3 - X_4$ which are present in the difference undirected graph but not in the difference DAG, DCI would compute regression coefficients  $\theta^{1}_{4 | S}$ and $\theta^{2}_{4 | S}$ for all $S \subseteq \Set{1, 2, 3}$, where $\theta^{1}_{j | S}$ (resp. $\theta^{2}_{j | S}$) denotes regression coefficients obtained by regressing $X_j$ against $X_S$ in the first (resp. second) SEM. 
For linear SEMs, estimating regression coefficients is equivalent to estimating the precision matrix (Lemma 1 in \cite{ghoshal2017learning}).
Furthermore, \citet{danaher2014joint} have shown that directly estimating the difference between precision matrices is more sample efficient  than estimating individual precision matrices and computing the difference. 
\begin{figure}
    \centering
    \includegraphics[width=0.7\linewidth]{figures/sem_diff_figure.pdf}
    \caption{From left to right: the two SEMs, the difference undirected graph (or difference of moral graphs), and the difference of precision matrices between the two SEMs with non-zero entries shown in black. \label{fig:example}}
\end{figure}

\textbf{Our contributions.}
The above results therefore call
for methods that estimate the difference DAG directly without computing individual SEM parameters. Towards that end, we make the following four contributions in this paper: 
\begin{enumerate}
\item We are the first to obtain high-dimensional finite sample guarantees for the problem of directly estimating the structural changes between two linear SEMs. When the noise variances of the variables across the DAGs are the same
our algorithm recovers the difference DAG in $\BigO{d^2 \log p}$ samples where  $d$ is the maximum number of edges in the difference of the moralized sub-graphs of the two SEMs, here the maximum is computed over subsets of variables. 
In the general (unequal noise variance) case, our method returns a partially directed DAG with the correct skeleton and orientation of the directed edges.

\item We show that, under some incoherence conditions, our method is strictly more sample efficient than estimating the individual SEMs and the DCI algorithm \cite{wang_direct_2018}.

\item Our algorithm improves upon the computational complexity of the algorithm by \cite{wang_direct_2018} for direct estimation of DAGs in the sense that it tests for the presence of fewer edges in the difference DAG.

\item We show that any method requires at least $\BigOm{d' \log(\nicefrac{p}{d'})}$ samples to recover the difference DAG, where $d'$ is the maximum number of parents of a variable in the difference DAG thereby showing that our method is sample optimal in the number of variables.
\end{enumerate}

\section{Notation and Problem Statement}
Let $X = (X_1, \ldots, X_p)$ be a $p$-dimensional vector. 
Let $[p]$ denote the set of integers $\{1\ldots p\}$.
We will denote a structural equation model by the tuple $(B, D)$ where $B$
is an autoregression matrix and $D = \Diag(\Set{\var_i})$ is a diagonal matrix of noise variances.
The SEM $(B, D)$ defines the following generative model over $X$:
\begin{align*}
X_i = B_{i*} X  + \varepsilon_i, \quad (\forall i \in [p]) ,
\end{align*}
$\B{1}_{i,i}=0$, $\Exp{\varepsilon} = 0$ and $\Var{\varepsilon} = \var_i < \infty$. We will denote the
$i$-th row (resp. $i$-th column) of a matrix $A$ by $A_{i*}$ (resp. $A_{*i}$).
Note that our notation of a linear SEM disregards the distribution the noise variables and only considers their second moment as our algorithm only utilizes the second moment of variables.
An autoregression matrix $B$ encodes a DAG $G = ([p], \supp(B))$ over $[p]$,
where $\supp(\parg)$ denotes the support set of a matrix (or a vector), i.e., 
$\supp(B) = \Set{(i,j) \in [p] \times [p] \mid B_{i,j} \neq 0}$.
Note that the edge $(i,j)$ denotes the directed edge $i \leftarrow j$.

Given two SEMs, $(\B{1}, \Di{1})$ and $(\B{2}, \Di{2})$, our goal in this paper is to recover the structural difference between the two DAGs, i.e., $\supp(\DB) \defeq \supp(\B{1} - \B{2})$. We assume that each of the individual autoregression matrices ($\B{1}$ and $\B{2}$)  to be potentially dense but their difference to be sparse. 
Specifically, we assume that each row and column of $\D_B \defeq \B{1} - \B{2}$ to have at most $d$ $(\ll p)$ non-zero entries. We further assume that there are no edge reversals between $\B{1}$ and $\B{2}$, thereby resulting in $\supp(\DB)$ being a DAG, which, as stated in the introduction, is a reasonable assumption in several practical problems \citep{zhao_direct_2014,belyaeva2020dci}. 
Formally, we are interested in the following problem:
\begin{problem}
Given two sets of observations $\X{1} \in \R^{n_1 \times p}$ and $\X{2} \in \R^{n_2 \times p}$,
drawn from the \emph{unknown} SEMs $(\B{1}, \Di{1})$ and $(\B{2}, \Di{2})$ respectively, estimate $\supp(\DB)$.
\end{problem}
We will often index the two SEMs by $\kappa \in \Set{1,2}$. 
We will denote the set of parents of the $i$-th node in the SEM indexed by $\kappa$ by $\Par{\kappa}(i)$,
while the set of children are denoted by $\Chi{\kappa}(i)$. 
We will denote the difference between the precision matrices of the two SEMs by: $\D_{\Omega}$, and the precision matrix
over any subset of variables $S \subseteq [p]$ by $\D_{\Omega}^S$. Similarly, $\Om{\kappa, S}$ denotes the precision
matrix over the subset $S$ in the SEM indexed by $\kappa$. We will denote the set of topological ordering
induced by a DAG $G = ([p], E)$ by $\mathcal{T}(G) = \Set{(\tau_1, \ldots, \tau_p) \in \Pi ([p]) \mid (\tau_i, \tau_j) \in E \text{ if } i < j}$, where $\Pi([p])$ is the set of permutations of $[p]$.
The notation $i \preceq_{\tau} j$ denotes that the vertex $i$ comes before $j$ (or $i = j$) in the topological order $\tau$.
Finally, we will always index precision matrices by vertex labels, i.e., $\Omega_{i,j}$ denotes the precision
matrix entry corresponding to the $i$-th and $j$-th node of the graph.
\section{Results}
We present a series of results leading up to our main algorithm for direct estimation of the difference between two DAGs. The following result characterizes
the terminal (or sink) vertices, i.e., vertices with no children, of the difference DAG
in terms of the entries of the difference of precision matrix.
\begin{proposition}
\label{prop:terminal_vertex}
Given two SEMs $(\B{1}, \Di{1})$ and $(\B{2}, \Di{2})$ and with 
precision matrices $\Om{1}$ and $\Om{2}$ respectively.
If for any node $i$ the edges incident on its children, and the corresponding noise variances remain
invariant across the two SEMs, i.e., $\forall j \in [p] : \: \B{1}_{j,i} = \B{2}_{j,i}$, $\Di{1}_{i,i} = \Di{2}_{i,i}$
 and $\Di{1}_{j,j} = \Di{2}_{j,j}$, 
then $(\D_{\Omega})_{i,i} = 0$. Furthermore, $i$ is a terminal vertex in the difference
DAG $G = ([p], \D)$ with $\D = \supp(\B{1} - \B{2})$. 
\end{proposition}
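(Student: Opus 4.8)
The plan is to write each precision matrix in terms of its autoregression matrix and the common noise covariance, and then to notice that the $i$-th diagonal entry of the precision matrix is a function of the $i$-th column of the autoregression matrix alone.

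First I would record the standard precision-matrix identity for a linear SEM. Stacking the structural equations gives $X = \B{\kappa} X + \varepsilon$, hence $(I - \B{\kappa}) X = \varepsilon$; because $\B{\kappa}$ encodes a DAG it is strictly triangular in a topological order, so $I - \B{\kappa}$ is unit-triangular and invertible and $X = (I - \B{\kappa})^{-1}\varepsilon$. Taking second moments yields the covariance $\Sm{\kappa} = (I - \B{\kappa})^{-1} D\, \Tpn{(I - \B{\kappa})}$, and inverting gives $\Om{\kappa} = \Tp{(I - \B{\kappa})} D^{-1} (I - \B{\kappa})$. This derivation uses only the invertibility of $D$, never its diagonality, which is precisely why the proposition may drop the assumption that $D$ is diagonal.

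Next I would isolate the diagonal entry. Writing $A^{(\kappa)} \defeq I - \B{\kappa}$, we have $\Om{\kappa}_{i,i} = \Tp{(A^{(\kappa)}_{*i})} D^{-1} A^{(\kappa)}_{*i}$, so $\Om{\kappa}_{i,i}$ depends on $\B{\kappa}$ only through the $i$-th column $A^{(\kappa)}_{*i}$, whose entries are a $1$ in coordinate $i$ and $-\B{\kappa}_{j,i}$ in coordinate $j \neq i$. The hypothesis $\B{1}_{j,i} = \B{2}_{j,i}$ for every $j$ says exactly that $A^{(1)}_{*i} = A^{(2)}_{*i}$; since the two SEMs share $D$, this forces $\Om{1}_{i,i} = \Om{2}_{i,i}$, i.e.\ $(\D_{\Omega})_{i,i} = 0$.

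The terminal-vertex claim is then immediate from the definition of the difference DAG. Under the paper's convention a nonzero $(\DB)_{j,i}$ is the edge $i \to j$, so the children of $i$ are indexed by the nonzero entries of the $i$-th column of $\DB = \B{1} - \B{2}$; the hypothesis makes that whole column vanish, so $i$ has no children and is a sink of $G = ([p], \D)$. I do not expect a genuine obstacle in this argument. The only points demanding care are the invertibility of $I - \B{\kappa}$ and the edge-direction bookkeeping that places the outgoing edges of $i$ in the $i$-th column of $B$; the single conceptual step that drives everything is the localization observation that $\Om{\kappa}_{i,i}$ sees $\B{\kappa}$ only through its $i$-th column.
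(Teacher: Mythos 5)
Your proposal is correct and follows essentially the same route as the paper: both derive $\Om{\kappa} = \Tp{(I - \B{\kappa})} D^{-1} (I - \B{\kappa})$ from the stacked structural equations, observe that $\Om{\kappa}_{i,i} = \Tp{(e_i - \B{\kappa}_{*,i})} D^{-1} (e_i - \B{\kappa}_{*,i})$ depends on $\B{\kappa}$ only through its $i$-th column, and conclude from the hypothesis $\B{1}_{*,i} = \B{2}_{*,i}$. Your write-up is slightly more explicit than the paper's (which leaves the terminal-vertex bookkeeping implicit), but the argument is the same.
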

The next result characterizes the edge weights and noise variances of the SEM obtained by removing a vertex $i$,
and plays a crucial role in developing our algorithm.
\begin{lemma}
\label{lemma:delete}
Let $(B, D)$ be a SEM with $D = \Diag(\Set{\var_i})$ and DAG $G$. Then the SEM obtained by removing a subset 
of vertices $U \subset [p]$, i.e., the SEM over $X_{[p] \setminus U}$, is given by $(\wtl{B}, \wtl{D})$
with $\wtl{D} = \Diag(\Set{\vart_j}_{j \in [p] \setminus U})$ and
\begin{align*}
\vart_j = \sigma^4_j \left\{\var_j - B_{j,U_j}(\Omega_{U_j, U_j}^{\Anc_j})^{-1}\Tp{(B_{j,U_j})} \right\}^{-1}, \quad
\wtl{B}_{j,k} = \frac{\vart_j}{\var_j}  \left \{ B_{j,k} -  B_{j,U_j}( \Omega_{U_j, U_j}^{\Anc_j})^{-1} (\Omega_{U_j, k}^{\Anc_j} ) \right\}
\end{align*}
$\forall j \in [p] \setminus U$ and $k \in \Anc_j$, where $\Anc_j$ denotes the ancestors of node $j$.
Also, $U_j = \Anc_j \intersection U$, and $\Omega^{\Anc_j}$ is  the precision matrix over $X_{\Anc_j}$. 
Finally, for $k \notin \Anc_j$, $\wtl{B}_{j,k} = 0$. 
\end{lemma}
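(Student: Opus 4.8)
The plan is to reduce the deletion operation to two precision-matrix manipulations on the predecessor set $\Anc_j$, invoking twice the fact that $j$ is a sink inside $\Anc_j$ so that its structural parameters can be read directly off the corresponding precision matrix. Because $\Anc_j$ is the set of predecessors of $j$ in the fixed order, it is closed under taking parents, so the restriction of $(B,D)$ to $X_{\Anc_j}$ is itself a valid SEM; repeating the derivation in the proof of Proposition~\ref{prop:terminal_vertex} for this subgraph gives $\Omega^{\Anc_j} = \Tp{(\mI - B_{\Anc_j,\Anc_j})}(D^{\Anc_j})^{-1}(\mI - B_{\Anc_j,\Anc_j})$. Since $j$ is the last vertex of $\Anc_j$ it has no children there, so $(\mI - B_{\Anc_j,\Anc_j})_{*,j} = e_j$ and the $j$-th column of $\Omega^{\Anc_j}$ collapses to $\Tp{(\mI - B_{\Anc_j,\Anc_j})}e_j/\var_j$. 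Reading off entries yields the sink identity $\Omega_{k,j}^{\Anc_j} = (\delta_{kj} - B_{j,k})/\var_j$; equivalently $\var_j = 1/\Omega_{j,j}^{\Anc_j}$, $B_{j,k} = -\Omega_{k,j}^{\Anc_j}\var_j$, and $\Omega_{U_j,j}^{\Anc_j} = -\Tp{(B_{j,U_j})}/\var_j$.

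Next I marginalize out $U_j = \Anc_j \cap U$. Deleting a successor of $j$ cannot alter $j$'s equation, and every predecessor of $j$ already lies in $\Anc_j$, so the reduced equation of $j$ is exactly the one obtained by marginalizing $U_j$ out of the self-contained $X_{\Anc_j}$ model, and $j$ remains a sink over $R_j \defeq \Anc_j \setminus U_j$. For a zero-mean vector, marginalizing a sub-block replaces the precision matrix by its Schur complement with respect to that block, so the precision over $X_{R_j}$ is $\wtl{\Omega} = \Omega_{R_j,R_j}^{\Anc_j} - \Omega_{R_j,U_j}^{\Anc_j}(\Omega_{U_j,U_j}^{\Anc_j})^{-1}\Omega_{U_j,R_j}^{\Anc_j}$. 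Applying the sink identity of the previous paragraph to $\wtl{\Omega}$ gives $\vart_j = 1/\wtl{\Omega}_{j,j}$ and $\wtl{B}_{j,k} = -\wtl{\Omega}_{k,j}\vart_j$ for $k \in R_j \setminus \{j\}$; for any successor $k \notin \Anc_j$ no back-edge is created, so $\wtl{B}_{j,k} = 0$.

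It then remains to substitute the sink identity into the $(j,j)$ and $(k,j)$ entries of $\wtl{\Omega}$. Using $\Omega_{j,j}^{\Anc_j} = 1/\var_j$ and $\Omega_{U_j,j}^{\Anc_j} = -\Tp{(B_{j,U_j})}/\var_j$ gives $\wtl{\Omega}_{j,j} = \var_j^{-2}\{\var_j - B_{j,U_j}(\Omega_{U_j,U_j}^{\Anc_j})^{-1}\Tp{(B_{j,U_j})}\}$, whose reciprocal is the claimed $\vart_j = \sigma^4_j\{\var_j - B_{j,U_j}(\Omega_{U_j,U_j}^{\Anc_j})^{-1}\Tp{(B_{j,U_j})}\}^{-1}$. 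For $k \in R_j\setminus\{j\}$ one gets $\wtl{\Omega}_{k,j} = -\var_j^{-1}\{B_{j,k} - B_{j,U_j}(\Omega_{U_j,U_j}^{\Anc_j})^{-1}\Omega_{U_j,k}^{\Anc_j}\}$, where symmetry of $(\Omega_{U_j,U_j}^{\Anc_j})^{-1}$ is used to transpose the scalar $\Omega_{k,U_j}^{\Anc_j}(\Omega_{U_j,U_j}^{\Anc_j})^{-1}\Tp{(B_{j,U_j})}$ into $B_{j,U_j}(\Omega_{U_j,U_j}^{\Anc_j})^{-1}\Omega_{U_j,k}^{\Anc_j}$; multiplying by $-\vart_j$ reproduces $\wtl{B}_{j,k} = (\vart_j/\var_j)\{B_{j,k} - B_{j,U_j}(\Omega_{U_j,U_j}^{\Anc_j})^{-1}\Omega_{U_j,k}^{\Anc_j}\}$. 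The step needing the most care is the reduction in the second paragraph — confirming that deleting $U$ acts on $j$ purely through a Schur complement on $\Omega^{\Anc_j}$ while leaving $j$ a sink, so that the same precision-to-parameter read-off can be invoked twice; after that, only the transpose bookkeeping and the $\var_j$-versus-$\sigma^4_j$ factors remain.
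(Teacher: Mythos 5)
Your proof is correct and takes essentially the same route as the paper's: both arguments combine the Schur-complement characterization of marginal precision matrices with the sink-vertex read-off ($\vart_j = 1/\wtl{\Omega}_{j,j}$, $\wtl{B}_{j,k} = -\vart_j \wtl{\Omega}_{k,j}$), and both converge to the same identity $1/\vart_j = 1/\var_j - \var_j^{-2}\, B_{j,U_j}(\Omega_{U_j,U_j}^{\Anc_j})^{-1}\Tp{(B_{j,U_j})}$. The only differences are organizational---the paper cites Propositions 3--4 of \cite{ghoshal2018learning} for the read-off and eliminates $\Anc_j^c \cup U_j$ from the full precision matrix via a $2\times 2$ block inversion, whereas you first restrict to the parent-closed set $\Anc_j$ and then Schur-complement out $U_j$---and your bookkeeping is in fact tighter, since the paper's final displayed step silently drops the $1/\var_j^2$ factor that the $\sigma^4_j$ in the lemma statement accounts for, while your derivation keeps it explicit.
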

As a corollary of the above lemma, we have the following result when $U = \Set{i}$ and
$i$ is a terminal vertex, i.e., $\Chi{G}(i) = \emset$.
\begin{corollary}
Given a SEM $(B, D)$ with $D = \Diag(\Set{\var_i})$, if $i$ is a terminal vertex, then the SEM obtained by removing vertex $i$ is given by $(B_{\mi, \mi}, D_{\mi, \mi})$.
\end{corollary}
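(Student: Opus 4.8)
The plan is to obtain the corollary as a direct specialization of Lemma \ref{lemma:delete} to the singleton $U = \Set{i}$, and to show that every Schur-complement correction term in the formulas for $\vart_j$ and $\wtl{B}_{j,k}$ collapses to zero precisely because $i$ is a sink. First I would instantiate the lemma with $U = \Set{i}$, so that for each surviving vertex $j \in [p] \setminus \Set{i}$ the relevant index set is $U_j = \Anc_j \intersection \Set{i}$, which is either $\emset$ or $\Set{i}$ according to whether $i$ lies in $\Anc_j$.

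The key observation is that a terminal vertex has no descendants and hence cannot be a proper ancestor of any other vertex. Since $\Chi{G}(i) = \emset$, no directed path in the parent-to-child direction leaves $i$, so $i \notin \Anc_j$ for every $j \neq i$; consequently $U_j = \Anc_j \intersection \Set{i} = \emset$ for all $j \in [p] \setminus \Set{i}$. I would justify this from the definition $\Anc_j = \Set{k \in [p] \mid k \preceq_{\tau} j,\ \tau \in \mathcal{T}(G)}$: any $k \in \Anc_j$ with $k \neq j$ is a genuine ancestor of $j$, and a sink, having no descendants, can be the ancestor of nobody, ruling out $i$.

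With $U_j = \emset$ in hand the remainder is a routine substitution into the two displayed formulas. The quadratic form $B_{j,U_j}(\Omega_{U_j, U_j}^{\Anc_j})^{-1}\Tp{(B_{j,U_j})}$ is an empty (zero-dimensional) expression and therefore vanishes, so $\vart_j = \sigma^4_j (\var_j)^{-1} = \var_j$, which gives $\wtl{D} = D_{\mi,\mi}$. Likewise the ratio $\vart_j/\var_j = 1$ and the empty regression correction $B_{j,U_j}(\Omega_{U_j, U_j}^{\Anc_j})^{-1}(\Omega_{U_j, k}^{\Anc_j})$ vanishes, leaving $\wtl{B}_{j,k} = B_{j,k}$ for every $k \in \Anc_j$. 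To finish I would note that for $k \notin \Anc_j$ the lemma already sets $\wtl{B}_{j,k} = 0$, while $B_{j,k} = 0$ as well since only ancestors can be parents; hence $\wtl{B}_{j,k} = B_{j,k}$ for all $j,k \in [p] \setminus \Set{i}$, i.e., $\wtl{B} = B_{\mi,\mi}$.

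The only substantive step is the graph-theoretic claim $i \notin \Anc_j$; once that is in place the formulas of Lemma \ref{lemma:delete} reduce to identities. That step is therefore where I would take the most care, and it is the main (though mild) obstacle, matching the intuition that marginalizing out a sink, which influences no other variable, leaves the generative model over the remaining vertices untouched.
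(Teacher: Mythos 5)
Your proof is correct and coincides with the paper's own treatment: the corollary is presented there as an immediate specialization of Lemma \ref{lemma:delete} to $U=\Set{i}$ with $i$ terminal, which is exactly your argument. One small caveat: under the reading of $\Anc_j$ actually used in the lemma's proof (its complement is the descendant set of $j$, so $\Anc_j$ may contain vertices incomparable to $j$), your claim $i \notin \Anc_j$ can fail for such $j$; but in that case $U_j=\Set{i}$ and $B_{j,i}=0$ because a sink has no children, so every correction term in the formulas for $\vart_j$ and $\wtl{B}_{j,k}$ still vanishes and your conclusion is unaffected.
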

With the above results in place, we are now ready to state our algorithm for learning the difference DAG. 
At a high level, the algorithm works as follows. Given the difference of precision matrix, we first remove the invariant vertices, i.e., vertices for which the corresponding rows and columns in the difference of precision matrix is all zeros. 
These vertices have no neighbors in the difference DAG and their noise variances remains the same across the two DAGs.
Next, we estimate the topological ordering over the remaining vertices in the difference DAG. 
When the noise variances are the same, the algorithm returns a topological order over all the variables. 
Whereas when some of the noise variances differ between the two SEMs, then the topological order is computed over a subset of variables.
After estimating the topological order, we orient the edges present in the difference of precision matrix according to the  ordering to compute a super-graph of the difference DAG. 
We then perform a final pruning step to remove the ``extra'' edges to obtain the correct difference DAG. We show how all these steps can be performed by manipulating \textbf{only} the difference of precision matrix. 
Furthermore, estimating the topological order affords us with significant computational and statistical advantage as we will elaborate later.
\begin{figure*}[!tb]
\begin{minipage}{0.5\textwidth}
\begin{algorithm}[H]
\caption{Main algorithm\label{alg:exact_alg}}
\begin{algorithmic}[1]
\Require $\Sm{1}$ and $\Sm{2}$
\Ensure $\D$
\State Estimate $\DO$
\State $V \gets [p]$
\State $U \gets \Set{i \mid (\DO)_{i,*} = 0}$ \label{line:U} \Comment{Set of invariant vertices}
\State $\DO \gets (\DO)_{U^c, U^c}$
\State $\Sm{1} \gets \Sm{1}_{U^c, U^c}$;\quad $\Sm{2} \gets \Sm{2}_{U^c, U^c}$
\State   $V \gets V_{U^c}$
\State $O, R \gets \textsc{computeOrder}(\DO, \Sm{1}, \Sm{2}, V)$
\State $\D \gets \Call{orientEdges}{\DO, O, R}$
\State $\D \gets \Call{prune}{\D, \DO, \Sm{1}, \Sm{2}, O, R}$
\State \Return $\D$
\end{algorithmic}
\end{algorithm}
\end{minipage}\hfill
\begin{minipage}{0.48\textwidth}
	\begin{algorithmic}[1]
		\Function{computeOrder}{$\DO$, $\Sm{1}$, $\Sm{2}$, $V$}
		\State $O \gets \emset$ \Comment{Topological order} 
		\While{$\Abs{V} > 1$}
    		\State $S \gets \Set{i \mid (\DO)_{i,i} = 0}$
    		\If{$S = \emptyset$}
    		    \State \Return $O$, $V$
    		\EndIf
    		\State Add $S$ to end of $O$
    		\State Remove the vertices in $S$ from $V$
    		\State Re-estimate $\DO$ over $V$
		\EndWhile 
		\State Add $V$ to end of $O$
		\State \Return $O$, $\emset$
		\EndFunction
	\end{algorithmic}
\end{minipage}

\begin{minipage}{0.48\textwidth}
\begin{algorithmic}[1]
\Function{orientEdges}{$\DO$, $O$, $R$}
\State $\D \gets \emset$
\For{$S \in O$}
    \For{$i \in S$}
        \State $N_i \gets \Set{j \neq i \mid (\DO)_{i,j} \neq 0}$
        \For{$j \in N_i$} 
            \State \textbf{if} $(j,i) \notin \D$ and $j \notin S$ \textbf{then} add $(i,j)$ in $\D$
        \EndFor
    \EndFor
\EndFor
\State $\D \gets \D \union \supp((\DO)_{R, R})$
\State \Return $\D$
\EndFunction
\end{algorithmic}

\end{minipage}\hfill
\begin{minipage}{0.48\textwidth}
\begin{algorithmic}[1]
\Function{prune}{$\D$, $\DO$, $\Sm{1}$, $\Sm{2}$, $O$, $R$}
\For{$(i,j) \in \D$}
\State Let $S_{ij}$ be descendants of $j$ in $O$ except $i$
\State \textbf{if} $(j,i) \in \D$ \textbf{then} $S_{ij} \gets S_{ij} \union R$
\State \textbf{for} $S \subset S_{ij}$ \textbf{do} estimate $\D_{\Omega}^S$
\State \textbf{if} $(\D_{\Omega}^S)_{i,j} = 0$ \textbf{then} remove $(i,j)$ (and $(j,i)$ if present) from $\D$
\EndFor 
\State \Return $\D$
\EndFunction
\end{algorithmic}
\end{minipage}
\end{figure*}

First, we prove the correctness of Algorithm \ref{alg:exact_alg} in the population setting, i.e., when
$\Sm{\kappa}$ is the true covariance matrix of the SEM $(\B{\kappa}, \Di{\kappa})$ for $\kappa \in \Set{1,2}$.
In this case $\DO$ can be computed efficiently by solving the linear system: 
$\Sm{1}(\DO)\Sm{2} = \Sm{2} - \Sm{1}$ \citep{zhao_direct_2014}. Since $\Sm{\kappa}$ is positive definite,
the above system has a unique solution. To prove the correctness of our algorithm in the population setting
we need the following assumption.

\begin{assumption}
\label{ass:main}
Let $(\B{1}, \Di{1})$ and $(\B{2}, \Di{2})$ be two SEMs with the difference DAG given by $\D_G = ([p], \D)$, where $\D = \supp(\B{1} - \B{2})$, and difference of precision matrix given by $\DO$.
Let $U = \Set{i \in [p] \mid (\DO)_{i,*} = 0}$ and let $V = [p] \setminus U$.
Then the two SEMs satisfy the following
assumptions:
\begin{enumerate}[(i)]
\item For $i \in U$, the edges and noise variances are invariant.
\item Let $\mathcal{S} = \Set{ \Set{\tau_1, \ldots, \tau_k} \subseteq [p] \mid \tau \in \mathcal{T}(\D_G), 0 \leq k \leq \Abs{V}}$.
For each $(i,j) \in \D$, and $\forall S \in \mathcal{S} : i,j \in S$, we have that $\corr{1}(X_i, X_j | X_{S'}) \neq \corr{2}(X_i, X_j | X_{S'})$, where $S' = S \setminus \Set{i,j}$.
\end{enumerate}
\end{assumption}
In the above, $\corr{1}(\parg)$ (resp. $\corr{2}(\parg)$) denotes partial correlation in the first (resp. second) SEM.
Condition (i) in the above assumption essentially requires that if none of the (undirected) edges incident on a vertex change in
the moral graph of the two DAGs then the (directed) edges incident on the node remains invariant across the two DAGs.
Condition (ii) above is essentially a restricted version of the faithfulness assumption which requires that if an edge $(i,j)$ changes across the two DAGs then $X_i \notindependent X_j \mid X_{S'}$.

The following theorem certifies the correctness of our algorithm in the \emph{population} case,
where $\skel(S)$ of a set of edges $S$ denotes the undirected skeleton where the directed
edges are converted to undirected edges, i.e., for all $(i,j) \in S$, $ (j,i) \in \skel(S) $.
\begin{theorem}
\label{thm:population}
Let $\D_G = ([p], \D^*)$ be the true difference DAG, where $\D^* = \supp(\B{1} - \B{2})$.
Given the true covariance matrices $\Sm{1}$ and $\Sm{2}$, Algorithm \ref{alg:exact_alg} returns $\D$ such that 
$\skel(\D) = \skel(\D^*)$ and all the directed edges in $\D$ are correctly oriented.
Moreover, if $\Di{1} = \Di{2}$ then $\D = \D^*$.
\end{theorem}
\begin{proof}[Proof Sketch of Theorem \ref{thm:population}]
Let $V' = [p] \setminus U$, where $U$ is the set of invariant vertices defined
in line \ref{line:U} of the main algorithm.
Let $\mathcal{T}(\D_G)$ be the set of topological orderings
induced by the true difference DAG. The correctness of Algorithm \ref{alg:exact_alg} 
follows from the following claims (proved in detail in Appendix \ref{app:proofs}):

\textbf{Claim (i):}  Denote the SEMs obtained by removing the vertices in $U$ from the initial SEMs $(\B{\kappa}, \Di{\kappa})$ by $(\Bt{\kappa}, \Dt{\kappa})$, for $\kappa \in \Set{1,2}$ respectively. 
Then we have that $\Dt{1} = \Dt{2} = \Diag(\Set{\vart_j}_{j \in V'})$, and $\supp(\Bt{1} - \Bt{2}) = \supp(\B{1} - \B{2}) = \D^*$.

\textbf{Claim (ii):} The function \computeOrder{} returns a list of sets $O = (S_1, \ldots, S_m)$ such that for every $i \in S_a$ and $j \in S_b$ with $a < b$, we have $i \prec_{\tau} j$ for some $\tau \in \mathcal{T}(\D_G)$. 
The set of vertices in $R$ occurs before all the vertices in $O$ in the causal order $\tau$
for all $\tau \in \mathcal{T}(\D_G)$.

\textbf{Claim (iii):} For $O = (S_1, \ldots, S_m)$ and any $i,j \in S_a$ for $a \in [m]$, the nodes
$i$ and $j$ do not have an edge between them in $\D^*$.

\textbf{Claim (iv):} The function \orientEdges{} returns a $\D$ such that $\D \supseteq \D^*$.

\textbf{Claim (v):} The function \prune{} returns a $\D$ such that $\skel(\D) = \skel(\D^*)$. Further, if  $\Di{1} = \Di{2}$ then $\D = \D^*$.
\end{proof}
\paragraph{Computational Complexity.} Note that the \orientEdges{} step already removes quite a few edges from the difference DAG. Then in the \prune{} step, for those edges in the difference of the precision matrix that are incident on variables present in $O$,  we only test over subsets that are descendants of the nodes. 
Whereas, the method of \cite{wang_direct_2018} test for subsets over all $[p]\setminus U$ vertices for each edge in the difference of the precision matrix.
Thus, our method is \textit{strictly more efficient} than that of \cite{wang_direct_2018}. 
Further, the computational complexity of our algorithm lies between two extremes.
When the noise variances are the same, we have $R = \emset$ and our algorithm runs most efficiently. 
While, if all the noise variances are different, i.e.,  $\Di{1}_{i,i} \neq \Di{2}_{i,i}$ for all $i$, then $O = \emptyset$ while $R = V$ and the computational complexity of our algorithm is the same as \cite{wang_direct_2018} in terms of the number of tests performed during pruning.

\subsection{Finite-Sample Guarantees}
In this section, we derive finite sample guarantees for our algorithm. The performance of our method depends
on how accurately the difference between the precision matrices are estimated. The problem of directly estimating the difference between the precision matrices of two Gaussian SEMs (or more generally Markov Random Fields), given samples drawn from the two individual models, has received significant attention over the past few years \cite{zhao_direct_2014, belilovsky_testing_2016, yuan_differential_2017,
liu2017support}. Among these, the KLIEP algorithm of \cite{liu2017support} and the algorithm of \cite{zhao_direct_2014}
come with provable finite sample guarantees. We use the algorithm of \cite{zhao_direct_2014} for estimating the difference of precision matrices. 
Given \emph{sample covariance matrices} $\Smh{(1)}$ and $\Smh{(2)}$, \cite{zhao_direct_2014} estimate the difference of precision matrix by solving the following optimization problem:
\begin{align*}
\DOh &= \argmin_{\DO} \NormI{\DO} \text{ subject to }  \ \NormMax{\Smh{(1)}(\DO)\Smh{(1)} - \Smh{(2)} + \Smh{(1)}} \leq \lambda_n,
\end{align*}
where $\lambda_n$ is the regularization parameter and $\NormMax{\parg}$ denotes maximum absolute value of the matrix.
Denoting $\beta = \vectorize(\DO)$ and by using the properties of Kronecker product, the above optimization problem 
can be written as follows:
\begin{align}
\widehat{\beta} &= \argmin_{\beta} \NormI{\beta} \text{ subject to }  \ \NormMax{(\Smh{(2)} \otimes \Smh{(1)}) \beta - \vectorize(\Smh{(1)} - \Smh{(2)})} \leq \lambda_n. \label{eq:opt}
\end{align}

However, to decouple the analysis of our algorithm from those of the algorithms for estimating the difference between precision matrices, we state our results with respect to a finite-sample oracle for estimation of the difference between precision matrices. Specifically, we make the following assumption:
\begin{assumption}
\label{ass:oracle}
Given samples $\X{1} \in \R^{n_1 \times p}$ and $\X{2} \in \R^{n_2 \times p}$ drawn from two linear SEMs, there exists an estimator $\widehat{\D}_{\Omega}$ for the difference between the precision matrices such that
$\Prob[X_1, X_2]{\NormMax{\widehat{\D}_{\Omega} - \D_{\Omega}} \leq \varepsilon} \geq 1 - \delta$,
if $n_1 \geq \eta_1(\varepsilon, \delta)$ and $n_2 \geq \eta_2(\varepsilon, \delta)$, for some $\varepsilon, \delta > 0$ and functions $\eta_1, \eta_2$.
\end{assumption}
We will also need a finite sample version of the condition in Assumption \ref{ass:main} to obtain our
finite sample guarantees.
\begin{assumption}
\label{ass:finite}
Let $(\B{1}, \Di{1})$ and $(\B{2}, \Di{2})$ be two SEMs with the difference DAG given by $\D_G = ([p], \D)$,
where $\D = \supp(\B{1} - \B{2})$, and difference of precision matrix given by $\DO$.
Let $U = \Set{i \in [p] \mid  (\DO)_{i,*} = 0}$ and let $V = [p] \setminus U$.
Then, the two SEMs satisfy the following assumptions:
\begin{enumerate}[(i)]
\item  For $i \in U$, the edges and noise variances are invariant.
\item Let $\mathcal{S} = \Set{ \Set{\tau_1, \ldots, \tau_k} \subseteq [p] \mid \tau \in \mathcal{T}(\D_G), 0 \leq k \leq \Abs{V}}$.
For each $(i,j) \in \D$ and $\forall S \in \mathcal{S} : i,j \in S$, we have $\Abs{\corr{1}(X_i, X_j | X_{S'}) - \corr{2}(X_i, X_j | X_{S'})} \geq 2\varepsilon$, for $S' = S \setminus \Set{i,j}$ and for some $\varepsilon > 0.$
\end{enumerate}
\end{assumption}
\begin{remark}
\label{remark:epsilon}
The finite sample version of Algorithm \ref{alg:exact_alg} also takes as input a threshold $\varepsilon$, and thresholds the difference of precision matrices at $\varepsilon$. 
\end{remark}
The sample complexity of the finite sample
algorithm depends on the number of non-zero entries (edges) in the difference of precision matrix.
Throughout the course of our algorithm we compute difference of precision matrices over subsets $S \in \mathcal{S}$, where $\mathcal{S}$ is defined in Assumption \ref{ass:finite}. In what follows, $d$ denotes the number of non-zero entries in the densest difference of precision matrix, i.e., $d = \max_{S \in \mathcal{S}} \Norm{\D_{\Omega}^{S}}_0$.
 Our next theorems formally characterize the finite sample guarantees of our method given in Algorithm \ref{alg:exact_alg}.
\begin{theorem}
\label{thm:finite_sample}
Under Assumptions \ref{ass:oracle}, \ref{ass:finite}.
Let $\D_G = ([p], \D^*)$ be the true difference DAG, where $\D^* = \supp(\B{1} - \B{2})$.
Let $\X{\kappa} \in \R^{n_{\kappa} \times p}$ be samples drawn from the two SEMs, and let $\Smh{(\kappa)} = (\nicefrac{1}{n_{\kappa}}) \Tp{\X{\kappa}} \X{\kappa}$ be the sample covariance matrices for $\kappa \in \Set{1,2}$.  
Given $\Smh{(1)}$, $\Smh{(2)}$ and $\varepsilon > 0$ as input, the finite sample version of Algorithm \ref{alg:exact_alg} returns $\D$ such that $\skel(\D) = \skel(\D^*)$ and all the directed edges in $\D$ are correctly oriented with probability at least $1 - \delta$ if $n_1 \geq \eta_1(\varepsilon, \delta)$ and $n_2 \geq \eta_2(\varepsilon, \delta)$ for some $\delta > 0$.
Furthermore, if $\Di{1} = \Di{2}$ then $\D = \D^*$.
\end{theorem}

\begin{theorem}[Adapted from \cite{zhao_direct_2014}]
\label{thm:cai_precision_difference}
Let $\Sm{\kappa}$ denote the true covariance matrices of the two SEMs respectively, for $\kappa \in \Set{1,2}$.
Also, let $\X{\kappa} \in \R^{n_{\kappa} \times p}$ be samples drawn from the two SEMs, for $\kappa \in \Set{1,2}$. 
Let $\DOh$ be the estimate of the difference of precision matrix obtained by solving \eqref{eq:opt}.
Define $\komax \defeq \max_{(i,j) \neq (k,l)}  \Abs{ \Sm{1}_{i,j} \Sm{2}_{k,l}}$
and $\kdmin \defeq \min_{i} \Sm{1}_{i,i} \Sm{2}_{i,i}$. 
Let $\eigmin(\parg)$ denote the minimum eigenvalue of a matrix. 
If $\komax \leq \frac{\eigmin(\Sm{1}) \eigmin(\Sm{2})}{2 \Norm{\DO}_0}$,
the regularization parameter, $\lambda_n$, and the number of samples, $n$, satisfy the following conditions:
\begin{align*}
n \geq \frac{C^2}{(\kdmin \varepsilon)^2} \log \frac{2p}{\delta} \quad \text{ and } \quad
\lambda_n \geq C \sqrt{\frac{1}{n} \log \frac{2p}{\delta}},
\end{align*}
where $C$ is a constant that depends linearly on 
$\Abs{\DO}_1$, $\NormMax{\Sm{\kappa}}$, and $\max_{\kappa, i} \Sm{\kappa}_{i,i}$,
then with probability at least $1 - \delta$ we have that $\NormMax{\DO - \DOh} \leq \varepsilon$.
\end{theorem}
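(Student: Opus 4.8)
The plan is to analyze \eqref{eq:opt} as a Dantzig-selector / CLIME-type constrained $\ell_1$ program, following the primal-feasibility, cone-condition, and diagonal-dominance template. The anchor is the exact population identity $\Sm{1}\DO\Sm{2}=\Sm{2}-\Sm{1}$, which follows from $\Om{\kappa}=(\Sm{\kappa})^{-1}$ and $\DO=\Om{1}-\Om{2}$; vectorizing and using $\vectorize(ABC)=(\Tp{C}\otimes A)\vectorize(B)$ gives $(\Sm{2}\otimes\Sm{1})\beta^*=\vectorize(\Sm{2}-\Sm{1})$ with $\beta^*=\vectorize(\DO)$. Hence $\beta^*$ exactly satisfies the population version of the constraint in \eqref{eq:opt} (up to the sign convention for $\DO$), and the entire argument reduces to controlling how sampling error in the $\Smh{(\kappa)}$ perturbs this identity.

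First I would establish feasibility of $\beta^*$ for the empirical program. Writing the empirical residual $\Smh{(1)}\DO\Smh{(2)}-(\Smh{(2)}-\Smh{(1)})$ and subtracting the vanishing population residual, I split it as $(\Smh{(1)}-\Sm{1})\DO\Smh{(2)}+\Sm{1}\DO(\Smh{(2)}-\Sm{2})-[(\Smh{(2)}-\Sm{2})-(\Smh{(1)}-\Sm{1})]$. Each piece has $\NormMax{\cdot}$ controlled by products of $\NormMax{\Smh{(\kappa)}-\Sm{\kappa}}$ with $\Abs{\DO}_1$, $\NormMax{\Sm{\kappa}}$, and $\max_\kappa\max_i(\Sm{\kappa}_{i,i})^2$ — precisely the quantities on which $C$ is stated to depend linearly. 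Entrywise concentration of the empirical covariance of (sub-)Gaussian data yields $\NormMax{\Smh{(\kappa)}-\Sm{\kappa}}\leq c\sqrt{\log(2p/\delta)/n}$ with probability $\geq 1-\delta$; taking $\lambda_n\geq C\sqrt{\log(2p/\delta)/n}$ then renders $\beta^*$ feasible on this event.

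On the feasibility event, optimality gives $\NormI{\betah}\leq\NormI{\beta^*}$; decomposing over $S=\supp(\beta^*)$ produces the cone condition $\NormI{\Delta_{\Sc}}\leq\NormI{\Delta_S}$, so $\NormI{\Delta}\leq 2\Norm{\DO}_0\NormMax{\Delta}$ for $\Delta=\betah-\beta^*$. Since both $\betah$ and $\beta^*$ obey the constraint, $\NormMax{(\Smh{(2)}\otimes\Smh{(1)})\Delta}\leq 2\lambda_n$. I then isolate the diagonal of the Kronecker operator: for each coordinate $(i,j)$,
\begin{equation*}
\Sm{1}_{i,i}\Sm{2}_{j,j}\,\Delta_{(i,j)}=\bigl((\Sm{2}\otimes\Sm{1})\Delta\bigr)_{(i,j)}-\sum_{(k,l)\neq(i,j)}\Sm{1}_{i,k}\Sm{2}_{j,l}\,\Delta_{(k,l)},
\end{equation*}
so that, using $\Sm{1}_{i,i}\Sm{2}_{j,j}\geq\eigmin(\Sm{1})\eigmin(\Sm{2})$ and $\Abs{\Sm{1}_{i,k}\Sm{2}_{j,l}}\leq\komax$ off the diagonal, $\eigmin(\Sm{1})\eigmin(\Sm{2})\NormMax{\Delta}\leq\NormMax{(\Sm{2}\otimes\Sm{1})\Delta}+\komax\NormI{\Delta}$. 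Passing from the empirical operator in the constraint to the population one (the discrepancy is again $\BigO{\NormMax{\Smh{(\kappa)}-\Sm{\kappa}}}\cdot\NormI{\Delta}$, absorbed into $\lambda_n$) and inserting the cone bound yields $(\eigmin(\Sm{1})\eigmin(\Sm{2})-2\Norm{\DO}_0\komax)\NormMax{\Delta}\leq 2\lambda_n$; the incoherence hypothesis keeps the left coefficient bounded below, and since $\kdmin\geq\eigmin(\Sm{1})\eigmin(\Sm{2})$ the sample-size condition $n\geq C^2\log(2p/\delta)/(\kdmin\varepsilon)^2$ forces $\NormMax{\DO-\DOh}\leq\varepsilon$.

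The main obstacle is this diagonal-dominance inversion that turns a bound on $\NormMax{(\Sm{2}\otimes\Sm{1})\Delta}$ into one on $\NormMax{\Delta}$. Two points need care: the incoherence condition $\komax\leq\eigmin(\Sm{1})\eigmin(\Sm{2})/(2\Norm{\DO}_0)$ must be spent so as to leave strictly positive, $\kdmin$-scale slack in $\eigmin(\Sm{1})\eigmin(\Sm{2})-2\Norm{\DO}_0\komax$ (this is exactly where the factor-of-two budgeting and the bound $\kdmin\geq\eigmin(\Sm{1})\eigmin(\Sm{2})$ are consumed), and the operator actually appearing in the constraint is the empirical $\Smh{(2)}\otimes\Smh{(1)}$, so the off-diagonal/diagonal bookkeeping must be redone with empirical entries and the resulting perturbations re-folded into $\lambda_n$, which couples this step back to the concentration estimates of the first step. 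Matching all constants so that the final threshold is exactly $\varepsilon$, rather than a constant multiple of it, is the most delicate accounting.
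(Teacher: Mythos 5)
Your overall template (feasibility of $\beta^*$, cone condition from optimality, diagonal dominance of the Kronecker operator, covariance concentration folded into $\lambda_n$) matches the paper's, but the step you yourself flag as the most delicate is precisely where your argument breaks, and the paper does something genuinely different there. You substitute the cone bound $\NormI{\Delta} \leq 2\Norm{\DO}_0\NormMax{\Delta}$ (for $\Delta = \betah - \beta^*$) directly into the diagonal-dominance inequality, arriving at
\begin{equation*}
\left(\eigmin(\Sm{1})\eigmin(\Sm{2}) - 2\Norm{\DO}_0\komax\right)\NormMax{\Delta} \;\leq\; \NormMax{(\Sm{2}\otimes\Sm{1})\Delta}.
\end{equation*}
Under the stated hypothesis $\komax \leq \eigmin(\Sm{1})\eigmin(\Sm{2})/(2\Norm{\DO}_0)$, the left coefficient is only guaranteed to be nonnegative: when the incoherence condition holds with (near-)equality, the coefficient (near-)vanishes and the bound is vacuous. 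Your assertion that the hypothesis leaves ``$\kdmin$-scale slack'' is exactly what is not available --- the factor of two in the hypothesis is budgeted for nonnegativity, not for a positive margin, and the inequality $\kdmin \geq \eigmin(\Sm{1})\eigmin(\Sm{2})$ does not help, since $\kdmin - 2\Norm{\DO}_0\komax$ can likewise be arbitrarily close to zero. As written, your argument proves the theorem only under a strictly stronger incoherence condition (e.g.\ with $4\Norm{\DO}_0$ in the denominator), not the stated one.

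The paper's route around this is a quadratic-form (restricted-eigenvalue-type) argument on the support, rather than direct substitution. With $x = \betah - \beta^*$, $S = \supp(\beta^*)$, and $\xb_S$ the restriction of $x$ to $S$, it sandwiches $\Abs{\Tp{\xb_S}\Sigma^* x}$ between $\eigmin(\Sigma^*)\NormII{x_S}^2 - \komax\NormI{x_S}^2$ from below (the cone condition enters here) and $\sqrt{\Abs{S}}\,\NormII{x_S}\,\NormInfty{\Sigma^* x}$ from above. The incoherence condition is spent inside this step --- twice: once to retain $\eigmin(\Sigma^*)/2$, and once more to convert $2\Abs{S}/\eigmin(\Sigma^*)$ into $1/\komax$ --- yielding $\NormI{x_S} \leq \NormInfty{\Sigma^* x}/\komax$, with $\komax$ in the \emph{denominator}. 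When this is fed back into the diagonal-dominance inequality $\kdmin\NormInfty{x} \leq \komax\NormI{x} + \NormInfty{\Sigma^* x}$, the $\komax$ factors cancel exactly, giving $\kdmin\NormInfty{x} \leq 3\NormInfty{\Sigma^* x} \leq 6\lambda_n$ with a constant that is uniform in how tight the incoherence condition is. That cancellation is the missing idea in your proposal; without it (or a strengthened hypothesis) your final threshold cannot be driven to $\varepsilon$ at the stated sample size.
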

For the proof of the above theorem, given in Appendix \ref{app:proofs}, we 
 adapt the proof of \cite{zhao_direct_2014} to obtain finite sample results in the form required
by Theorem \ref{thm:finite_sample}. Specifically, we analyze the optimization problem given by \eqref{eq:opt}, whereas
\cite{zhao_direct_2014} only estimate the upper diagonal of the difference of precision matrix $\DO$ thereby improving the computational complexity of estimation at the cost of requiring more stringent conditions on the true covariance matrices. 
We also use concentration of covariance matrix results from \cite{ravikumar2011high} to obtain finite sample results in the form required by Theorem \ref{thm:finite_sample}. 
From the above theorem we can conclude that the method of \cite{zhao_direct_2014} requires an incoherence condition on the true covariance matrices --- which is similar to known incoherence conditions for estimating precision matrices \cite{ravikumar2011high} --- for direct estimation of the difference  of precision matrices. 
Furthermore, the true difference of precision matrix essentially needs to have constant sparsity, i.e., the number of non-zero entries in the difference of precision matrix ($\Norm{\DO}_0)$ should be constant in the high-dimensional regime for the constant $C$ in the above theorem to not depend on $p$.
Finally, we then have the following finite sample result on estimating the difference DAG using \eqref{eq:opt}.

\begin{corollary}
\label{cor:finite_sample}
Using \eqref{eq:opt} to estimate the difference of precision matrices, if $\min(n_1, n_2) = 
\BigO{(\frac{d^2}{\varepsilon^2}) \log (\frac{p}{\delta})}$, $\komax \leq \frac{\eigmin(\Sm{1}) \eigmin(\Sm{2})}{2 d}$,
and $\lambda_n = \BigOm{\sqrt{\frac{1}{n} \log \frac{2p}{\delta}}}$, where the constant $\komax$ is defined in 
Theorem \ref{thm:cai_precision_difference}, and the true difference DAG satisfies Assumption \ref{ass:finite},
then the following holds with probability at least $1 - \delta$.
The finite sample version of  Algorithm \ref{alg:exact_alg} returns $\D$ such that $\D = \D^*$  when
$\Di{1} = \Di{2}$, otherwise $\skel(\D) = \skel(\D^*)$ and all the directed edges in $\D$ are correctly oriented.
\end{corollary}
The following proposition compares the sample complexity of our method against indirect methods that
first learn the individual SEMs and then compute the difference. 
The detailed comparison can be found in Appendix  \ref{app:comparison}.
\begin{proposition}
If the incoherence condition $\komax \leq \nicefrac{\eigmin(\Sm{1}) \eigmin(\Sm{2})}{2 \Norm{\DO}_0}$ is satisfied, then the sample
complexity of our method is strictly better than indirect estimation methods  \cite{ghoshal2018learning}, with the 
former only depending on $\Abs{\DO}_1$ while the latter depending on $\NormI{\Om{\kappa}}^2$.
\end{proposition}
Lastly, in the absence of any sample complexity guarantees for DCI \cite{wang_direct_2018}, it is difficult to make any theoretical comparison with it. However, given empirical results on the superiority of direct estimation methods for precision matrices \cite{danaher2014joint}, and that in the worst case when the noise variances are different, our method performs the same number of tests as \cite{wang_direct_2018} without computing individual regression coefficients, we believe our method to have strictly better sample complexity than \cite{wang_direct_2018}.

\subsection{Fundamental Limits}
In this section, we obtain fundamental limits on the sample complexity of direct estimation of the
difference DAG. Towards that end, we consider the minimax error of estimation which we define over the subsequent lines.
Let $\Xf = (\X{1}, \X{2})$ be the two sets of $n$ samples generated from the product distribution $P^n = P_1^n \times P_2^n$
where $P_{\kappa}$ corresponds to a Gaussian linear SEM for $\kappa \in \Set{1, 2}$. Let $\Pf$ be the family of all such 
product distributions such that the DAGs $\G{1}$ and $\G{2}$ share the same causal order.
 We will denote the corresponding DAG for the distribution $P_{\kappa}$ by $G(P_{\kappa})$ and we will denote the difference 
DAG by $\D_G(P)$. Let $\Dec$ be a decoder that takes as input the two sets of samples $\Xf$ and returns a difference DAG $\Dec(\Xf)$.
The minimax estimation error is then defined as:
\begin{align}
\perr \defeq \inf_{\Dec} \sup_{P \in \Pf} \Prob[\Xf \sim P^n]{\Dec(\Xf) \neq \D_G(P)}, \label{eq:perr}
\end{align}
where the infimum is taken over all decoders that take as input two sets of samples drawn from a distribution $P \in \Pf$
and return a difference graph.  The following theorem lower bounds the minimax error.
\begin{theorem}
\label{thm:lower_bound}
Given $n$ samples drawn from each of the two linear Gaussian SEMs with DAGs $\G{1}$ and $\G{2}$ such that the DAGs share the same causal order, and the difference DAG $\D_G$ is sparse with each node having at most $\dt$ parents.  
If the number of samples $n \leq (\nicefrac{\dt}{2}) \log (\nicefrac{p}{2\dt}) - (\nicefrac{2}{p}) \log 2$ then $\perr \geq \nicefrac{1}{2}$,
where $\perr$ is defined in \eqref{eq:perr}.
\end{theorem}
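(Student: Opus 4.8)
The plan is to establish this information-theoretic lower bound through Fano's inequality applied to a carefully engineered finite sub-family of $\Pf$. First I would reduce the minimax problem to a multiple-hypothesis test: it suffices to exhibit an ensemble $\Set{P_1, \ldots, P_M} \subseteq \Pf$ whose difference DAGs $\D_G(P_1), \ldots, \D_G(P_M)$ are pairwise distinct, so that any decoder $\Dec$ recovering the difference DAG must in particular identify the index. Drawing $W$ uniformly from $[M]$ and $\Xf \sim P_W^n$, Fano's inequality gives
\[
\perr \geq 1 - \frac{I(W; \Xf) + \log 2}{\log M},
\]
and since the samples are i.i.d. I would bound $I(W; \Xf) \leq \max_{i,j} \KL{P_i^n}{P_j^n} = n \max_{i,j} \KL{P_i}{P_j}$, reducing everything to a packing count and a single-sample KL estimate.

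The crux is designing an ensemble that simultaneously makes $\log M$ as large as $\BigOm{dp\log(p/d)}$ while holding $\max_{i,j}\KL{P_i}{P_j}$ at $\BigO{p}$. I would take both SEMs to share the identity causal order and noise covariance $D = \mI$, and set $\B{1} = 0$, so that the difference DAG coincides with $\supp(\B{2})$ and the problem becomes that of recovering $\supp(\B{2})$ from the (only informative) second environment. To decouple the divergence I would impose a bipartite layout: nodes $\Set{1, \ldots, p/2}$ have no parents, while each node $i \in \Set{p/2+1, \ldots, p}$ selects exactly $d$ parents $S_i \subseteq \Set{1, \ldots, p/2}$ with all edge weights equal to $a = 1/\sqrt{d}$. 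Ranging over all such choices produces $M = \binom{p/2}{d}^{p/2}$ distinct difference DAGs, each with at most $d$ parents per node and all respecting the shared causal order, so that $\log M \geq \tfrac{pd}{2}\log\tfrac{p}{2d}$.

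For the divergence I would compute $\KL{P_i}{P_j}$ between the Gaussians $\Gauss{0}{\Sigma_i}$ and $\Gauss{0}{\Sigma_j}$. Because $\mI - \B{2}$ is lower triangular with unit diagonal, $\det(\mI - \B{2}) = 1$ and hence $\det \Sigma_i = \det \Sigma_j = 1$, so the log-determinant term in the Gaussian KL cancels and the divergence collapses to $\tfrac{1}{2}\left(\Tr{\Omega_j \Sigma_i} - p\right)$. The bipartite structure renders the children conditionally independent given the first half, whose marginal is the fixed $\Gauss{0}{\mI}$ across all configurations; the divergence therefore factorizes over the $p/2$ children, and each factor evaluates to $a^2\Abs{S_i \setminus S_j}$. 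Summing and using $a^2 = 1/d$ together with $\Abs{S_i \setminus S_j} \leq d$ gives $\KL{P_i}{P_j} \leq p/2$. Substituting $\max_{i,j}\KL{P_i}{P_j} \leq p/2$ and $\log M \geq \tfrac{pd}{2}\log\tfrac{p}{2d}$ into the Fano bound and requiring $\perr \geq 1/2$ yields exactly $n \leq \tfrac{d}{2}\log\tfrac{p}{2d} - \tfrac{2}{p}\log 2$.

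The main obstacle is the tension inherent in the ensemble design. A naive construction that varies only a single node's parents gives $\log M = \BigO{d\log(p/d)}$ and a clean KL, but loses the factor of $d$; conversely, letting every node vary with fixed weights inflates the KL and, worse, the cross-couplings in $(\mI - \B{2})^{-1}$ make the pairwise divergence awkward to control because the Neumann-series expansion introduces higher-order interactions between nodes. The resolution is precisely the half-and-half bipartite layout paired with the scaling $a = 1/\sqrt{d}$: the layout forces the second-half nodes to depend only on a common, configuration-independent marginal, which eliminates the cross terms and makes the KL a clean per-child sum, while the weight scaling balances the per-edge divergence $a^2$ against the $d$ parents so that $M$ and the KL budget scale in the exact ratio needed to produce the $\tfrac{d}{2}\log\tfrac{p}{2d}$ threshold. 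Verifying the per-child identity $\KL{\cdot}{\cdot} = a^2\Abs{S_i \setminus S_j}$ via the trace formula and confirming the count of distinct difference DAGs are the remaining routine steps.
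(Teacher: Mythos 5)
Your proof is correct and lands on exactly the stated threshold, but it takes a route that differs from the paper's in two substantive ways. The skeleton is shared: both arguments run Fano's inequality over a bipartite packing ensemble in which $p/2$ source nodes feed $p/2$ sink nodes, each sink choosing $d$ parents, giving $\log M \geq (\nicefrac{pd}{2})\log(\nicefrac{p}{2d})$. The differences are in the ensemble and in the mutual-information bound. You set $\B{1}=0$, so the first environment is pure noise (contributing zero KL and zero information) and the difference DAG is literally $\supp(\B{2})$; the paper instead takes $\G{1}$ to be the \emph{complete} bipartite DAG and $\G{2}=\G{1}\oplus\D_G$, with parent-normalized weights $\nicefrac{1}{\sqrt{\Abs{\nPar(i)}}}$ and noise variance tuned to $\var=\nicefrac{2}{3}$. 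For the information term you use standard Fano with a uniform prior and the bound $\MI(W;\Xf)\leq n\max_{i,j}\KL{P_i}{P_j}$, which your construction makes clean: the KL chain rule factorizes over sinks (their conditional laws given the sources are unit-variance Gaussians), each sink contributing $a^2\Abs{S_i\setminus S_j}\leq 1$ with $a=\nicefrac{1}{\sqrt{d}}$, so $\KL{P_i}{P_j}\leq \nicefrac{p}{2}$. The paper instead randomizes the bipartition as well, invokes a generalized (conditional) Fano inequality, and bounds the mutual information by the average KL to a fixed isotropic reference $Q=\Gauss{0}{I}$, computed via trace and log-determinant of $\Sigma(G)$, obtaining $\MI \leq np\ln(\nicefrac{3}{2}) < \nicefrac{np}{2}$ --- the same order. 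What your version buys is self-containedness and simplicity: no generalized Fano lemma, no reference-measure trick, no tuning of $\var$, and the first environment is manifestly uninformative. What the paper's version buys is that its hard instances have \emph{dense} individual DAGs (each node has $\Theta(p)$ parents), so the lower bound is exhibited precisely in the regime motivating the paper --- dense individual graphs with sparse differences --- whereas your instances have sparse individual DAGs; since the minimax family $\Pf$ only constrains the shared causal order and the sparsity of $\D_G$, both choices are legitimate for proving the stated theorem.
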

\begin{remark}
To the best of our knowledge, Theorem \ref{thm:lower_bound} is the first information-theoretic lower bound to the minimax error of difference-DAG estimation.
Our result shed lights on the necessary number of samples for learning the difference DAG under \textbf{any} method, and shows the logarithmic dependence on the number of variables.
\end{remark}
Next, we compare the sample complexity of our method against indirect methods that first estimate the individual SEMs and then compute the difference.
\section{Experiments}
\label{sec:exps}
In this section, we describe empirical results from running our Algorithm \ref{alg:exact_alg} on synthetic data with the goal of verifying our theoretical contributions.
For the population case, we test graphs of size up to $p=100$ nodes, while for the finite-sample case, we test graphs of small size due to high computational cost of indirect methods.
We also provide a real-world experiment in Appendix \ref{app:real_world_exp}, where the number of nodes is $p=157$.

\paragraph{Generative process.} 
For generating a random SEM pair, we first generate an \Erdos-\Renyi random DAG on $p$ nodes with average neighborhood size $N_a$. 
Then, we generate the second DAG by deleting an existing edge or adding a new edge, consistent with the topological ordering of the first DAG, with probability $0.05$ each.
Thus, each DAG has an average of $\nicefrac{p N_a}{2}$ edges out of $\nicefrac{p(p-1)}{2}$ possible edges.
We set the edge weights to be uniformly at random in the set $[-1, -0.25] \union [0.25, 1]$, while noise variances are set to follow a normal distribution $\mathcal{N}(0,1)$.

\paragraph{Population setting.}
As dictated by Theorem \ref{thm:population},  Algorithm \ref{alg:exact_alg} should return the exact structural difference of SEMs if given the population covariance matrix of each SEM.
In Table \ref{tab:population}, we show that on graphs of up to $p=100$ nodes, in effect, Algorithm \ref{alg:exact_alg} returns the true difference-DAG.
\begin{table}[!tb]
\centering
\caption{Following our generative process described above, we generated $50$ pairs of networks with $p=\{10, 50, 100\}$ variables with expected neighborhood size of $N_a = \{6, 35, 80\}$, respectively.
That is, for each of those values of $p$, each network in the pair of DAGs is \textbf{dense} with expected number of $\{30, 875, 4000\}$ edges, respectively.
We obtained that $\Delta = \Delta^*$ almost surely, after using our Algorithm \ref{alg:exact_alg} and the population covariance matrices as inputs.}
\label{tab:population}
\begin{tabular}{@{}cccc@{}}
\toprule
\textbf{$p$} & \textbf{$N_a$} & {Expected $|E|$ of each DAG} & $\hat{\mathrm{Pr}} \{ \Delta = \Delta^* \}$ \\
\midrule
10  & 6  & 30   & 1.0 \\
50  & 35 & 875  & 1.0 \\
100 & 80 & 4000 & 1.0 \\
\bottomrule
\end{tabular}
\end{table}

\paragraph{Finite-sample setting.}
For experiments with a finite number of samples, we also follow our generative process above.
We generate $50$ pairs of DAGs with $p=10$ and $N_a=6$, and make sure that the pair of SEMs have $\varepsilon \geq 0.1$ in Assumption \ref{ass:finite}.
We then generate $\floor{e^{C} \log p}$ number of samples from each SEM for $C \in [5, 13]$.
In Figure \ref{fig:plot}, we compare against the algorithms: PC \cite{spirtes2000causation}, GES \cite{meek1997graphical}, MMHC \cite{tsamardinos2006max}, and LiSTEN \cite{ghoshal2018learning}, all of which first learn each SEM \textbf{separately} and then output the difference of adjacency matrices as the difference DAG.
Finally, we also compare against the DCI-C method \cite{wang_direct_2018}, which, as in our setting, also estimates the difference of SEMs.
We note how traditional state-of-the-art methods (PC, GES, MMHC, LiSTEN) suffer learning the difference DAG as each DAG independently is \textbf{dense}.
The closest to our results is the DCI-C method, although, as seen in Figure \ref{fig:plot} (Left), our algorithm performs better in the small sample complexity regime.
\begin{figure}[!t]
\centering
\includegraphics[width=0.49\linewidth]{f1_score.png} \hfill
\includegraphics[width=0.49\linewidth]{prob_exact.png}
\caption{(Left) Average of F1 scores and (Right) proportion of exact recovery, computed across 50 repetitions.}
\label{fig:plot}
\end{figure}
\begin{remark}
We emphasize that the reason to set $p = 10$ in the above experiment is due to the exponential computational cost of the PC algorithm for learning each \textbf{dense} SEM separately.
\end{remark}

\paragraph{Unequal noise variance.}
We now set out to understand the performance of our algorithm when we set different noise variances.
As shown above, the DCI-C algorithm \citep{wang_direct_2018} is the most comparable method to ours, thus, here we compare against it.
For this experiment, we sampled pairs of SEMs under the same finite-sample setting.
However, instead of fixing the noise variance to be one for all nodes, we set the noise variance for each node to be one of $\{1-\gamma, 1, \gamma\}$ with probability $\nicefrac{1}{3}$, where $\gamma$ is a noise parameter. 
In Figure \ref{fig:gammas}, we note that we still achieve close-to-perfect recovery in a stable manner, while the performance of DCI-C decreases as the change in noise variance increases.
\begin{figure}[!t]
\centering
\includegraphics[width=0.49\linewidth]{F1_gammas.png} \hfill 
\includegraphics[width=0.49\linewidth]{probexact_gammas.png}
\caption{(Left) Average of F1 scores, and (Right) Prob. of exact recovery, computed across 50 repetitions.
In this case the number of samples was set to $C=9$ (see finite-sample setting).}
\label{fig:gammas}
\end{figure}

\paragraph{Additional experiments.} To conclude our empirical results, we present two additional experiments in the appendix.
In Appendix \ref{app:log_phase}, we empirically corroborate the logarithmic dependence on the number of variables, while in Appendix \ref{app:real_world_exp}, we show an experiment on a real-world dataset with $p=157$ variables from the medical domain, which demonstrates the applicability of our method.

\section{Conclusion}
In this paper we considered the problem of directly estimating the difference-DAG of two linear SEMs, that share the same causal order, from samples generated from the individual SEMs. 
We showed that if the number of samples from each SEM grows as $\BigO{d^2 \log p}$ where $d$ is the number of edges in the (densest) difference of moral sub-graphs, 
and under an incoherence condition on the true covariance and precision matrices, our algorithm recovers either the correct DAG or partially directed DAG with the correct
skeleton and correct orientation of directed edges depending on whether or not the noise variances are the same.
We also showed that any algorithm requires $\Omega(\dt \log (p/\dt))$ samples to estimate the difference DAG consistently where $\dt$ is the maximum number of parents of a node in the difference DAG.

\normalsize
\bibliographystyle{agsm}
\bibliography{paper}

\normalsize
\clearpage
\onecolumn
\begin{center}
\Large{\bf Supplementary Material}\\
\large{\bf Direct Learning with Guarantees of the Difference DAG Between \\Structural Equation Models}
\end{center}
\hrule
\begin{appendix}
\section{Detailed proofs}
\label{app:proofs}

\begin{proof}[Proof of Lemma \ref{lemma:delete}]
From Proposition 3 of \cite{ghoshal2018learning} we have that for a terminal vertex $j$: $\nicefrac{1}{\var_j} = \Omega_{j,j}$.
Thus for an arbitrary vertex $j$, the inverse of the noise variance of $j$ is given by the corresponding diagonal entry
of the precision matrix obtained by removing all descendants of $j$. Further, the precision matrix over a subset of
vertex $S \leq [p]$ is given by the Schur-complement $\Omega_{S, S} - \Omega_{S, S^c} (\Omega_{S^c, S^c})^{-1} \Omega_{S^c,S}$,
where $S^c$ denotes the complement of $S$. Note that $\Anct_j = \Anc_j \setminus U_j$, $\Anct_j^c = \Anc_j^c \union U_j$,
and $\Anc_j^c \intersection U_j = \emset$.
Therefore, 
\begin{align*}
\frac{1}{\vart_j} &= \Omega_{j,j} - (\Omega_{j,\Anct_j^c}) (\Omega_{\Anct_j^c,\Anct_j^c})^{-1} (\Omega_{\Anct_j^c,j}) \\
&= \Omega_{j,j} - (\Omega_{j,\Anc_j^c}, \Omega_{j,U_j}) \matrx{\Omega_{\Anc^c_j,\Anc^c_j} & \Omega_{\Anc^c_j,U_j} \\
\Omega_{U_j, \Anc^c_j} & \Omega_{U_j, U_j}}^{-1} \matrx{\Omega_{U_j,j} \\ \Omega_{\Anc_j^c,j}} \\
&= \Omega_{j,j} - (\Omega_{j,\Anc_j^c}, \Omega_{j,U_j}) 
\matrx{\Omega_{\Anc_j^c,\Anc_j^c}^{-1} + P & Q \\
\Tp{Q} & R} \matrx{\Omega_{U_j,j} \\ \Omega_{\Anc_j^c,j}},
\end{align*}
where
\begin{align*}
P &\defeq  \Omega_{\Anc_j^c,\Anc_j^c}^{-1} \Omega_{\Anc^c_j, U_j} R \Omega_{U_j, \Anc^c_j} \Omega_{\Anc_j^c,\Anc_j^c}^{-1}, \\
Q &\defeq - \Omega_{\Anc_j^c,\Anc_j^c}^{-1} \Omega_{\Anc^c_j, U_j} R, \\
R &\defeq = (\Omega_{U_j, U_j} - \Omega_{U_j, \Anc^c_j} \Omega_{\Anc_j^c,\Anc_j^c}^{-1} \Omega_{\Anc^c_j, U_j})^{-1}.
\end{align*}
Now, writing 
\begin{align*}
\matrx{\Omega_{\Anc_j^c,\Anc_j^c}^{-1} + P & Q \\
\Tp{Q} & R} = \matrx{\Omega_{\Anc_j^c,\Anc_j^c}^{-1} & 0 \\ 0 & 0} + 
\matrx{P & Q \\ \Tp{Q} & R},
\end{align*}
and some algebraic manipulations later we have that:
\begin{align*}
\frac{1}{\vart_j} &= \frac{1}{\var_j} - (\Omega_{j, \Anc^c_j} \Omega_{\Anc_j^c,\Anc_j^c}^{-1} \Omega_{\Anc^c_j, U_j} - \Omega_{j,U_j}) R \times \\
	&\qquad (\Omega_{U_j, \Anc^c_j} \Omega_{\Anc_j^c,\Anc_j^c}^{-1} \Omega_{\Anc^c_j, U_j} - \Omega_{U_j,j}) \\
&= \frac{1}{\var_j}  - (\Omega^{\Anc_j}_{j, U_j})(\Omega^{\Anc_j}_{U_j, U_j})^{-1}(\Omega^{\Anc_j}_{U_j,j}) \\
& = \frac{1}{\var_j}  - (B_{j, U_j})(\Omega^{\Anc_j}_{U_j, U_j})^{-1}\Tp{(B_{j, U_j})},
\end{align*}
where the last line follows from Proposition 4 of \cite{ghoshal2018learning} since $j$ is a
terminal vertex in the induced subgraph over $\Anc_j$. For characterizing the edge weights,
observe once again that $j$ is a terminal vertex in the induced subgraph over $\Anc_j$, and therefore from
from Proposition 4 of \cite{ghoshal2018learning} we have that:
\begin{align*}
\wtl{B}_{j,k} = -\vart_j (\Omega_{j,k} - (\Omega_{j,\Anct_j^c}) (\Omega_{\Anct_j^c,\Anct_j^c})^{-1} (\Omega_{\Anct_j^c, k})).
\end{align*}
The final result follows from following the previously derived steps for the noise variance.
\end{proof}
\begin{proof}[Proof of Theorem \ref{thm:population}]
Let $V' = [p] \setminus U$ and let $p' = \Abs{V'}$, where $U$ is the set of invariant vertices defined
in line \ref{line:U} of the main algorithm. Denote the two initial DAGs by $\G{1}$ and $\G{2}$.
Let $\mathcal{T}(\D_G)$ be the set of topological orderings
induced by the true difference DAG. The correctness of Algorithm \ref{alg:exact_alg} 
follows from the following claims, which  we prove subsequently. 
\begin{itemize}
\item \emph{Claim (i)}: 
Denote the SEMs obtained by removing the vertices in $U$ from the initial SEMs $(\B{\kappa}, \Di{\kappa})$ by
$(\Bt{\kappa}, \Dt{\kappa})$, for $\kappa \in \Set{1,2}$ respectively. Then 
we have that $\Dt{1} = \Dt{2} = \Diag(\Set{\vart_j}_{j \in V'})$, 
and $\supp(\Bt{1} - \Bt{2}) = \supp(\B{1} - \B{2}) = \D^*$.

\item \emph{Claim (ii)}: The function \computeOrder{} returns a list of sets $O = (S_1, \ldots, S_m)$
such that for every $i \in S_a$ and $j \in S_b$ with $a < b$,  we have $i \prec_{\tau} j$ for some $\tau \in \mathcal{T}(\D_G)$.

\item \emph{Claim (iii)}: For $O = (S_1, \ldots, S_m)$ and any $i,j \in S_a$ for $a \in [m]$, the nodes
$i$ and $j$ do not have an edge between them in $\D^*$.

\item \emph{Claim (iv)}: The function \orientEdges{} returns a $\D$ such that $\D \supseteq \D^*$.

\item \emph{Claim (v)}: The function \prune{} returns a $\D$ such that $\D = \D^*$.
\end{itemize}

\paragraph{Proof of Claim (i).} Lemma \ref{lemma:delete} gives the characterization $(\Bt{\kappa}, \Dt{\kappa})$
for $\kappa \in \Set{1,2}$. By Assumption \ref{ass:main} (i) we have that for each $j \in U$,
$\B{1}_{j,k} = \B{2}_{j,k}$ $\forall k$. Note that by definition of $U$, for any $u, v \in U$,
we have that $\Om{1}_{u,v} = \Om{2}_{u,v}$. Next we will show that for any node $j \in V'$,
$\Om{1, \Anc_j}_{u,v} = \Om{2, \Anc_j}_{u,v}$ for any $u,v \in U_j$ (recall that $U_j = U \intersection \Anc_j$
and $\Anc_j$ is the set of ancestors of $j$ in the toplogical order in the initial SEMs).
Since $\Om{\kappa, \Anc_j}$ is the precision matrix for the SEM
obtained by removing $\Anc_j^c$, we have that for any $u, v \in U_j$, $\Anc_j^c$ contains vertices that occur
after $u$ and $v$ in the causal order. Therefore by Lemma \ref{lemma:delete}:
\begin{align*}
\Om{1,\Anc_j}_{u,v} &= -\nicefrac{\B{1}_{u,v}}{\var_u} -\nicefrac{\B{1}_{v,u}}{\var_v} + 
\sum_{l \in \Chi{1}(u) \intersection \Chi{1}(v) \intersection \lAnc{1}_j} \mkern-18mu \nicefrac{\B{1}_{l,u} \B{1}_{l,v}}{\var_l} \\
&= -\nicefrac{\B{2}_{u,v}}{\var_u} -\nicefrac{\B{2}_{v,u}}{\var_v} + 
 \sum_{l \in \Chi{2}(u) \intersection \Chi{2}(v) \intersection \lAnc{2}_j} \mkern-18mu \nicefrac{\B{2}_{l,u} \B{2}_{l,v}}{\var_l} 
= \Om{2,\Anc_j}_{u,v}.
\end{align*} 
Therefore, once again by Lemma \ref{lemma:delete} we have that $\Om{1,\Anc_j} = \Om{2,\Anc_j}$ and thus $\smt{1}_j = \smt{2}_j = \wtl{\sigma}_j$,
and $\Bt{1}_{j,k} = \Bt{2}_{j,k}$, $\forall j,k \notin U$, where $\wtl{\sigma}_j$ and $\Bt{\kappa}_{j,k}$ are given by Lemma \ref{lemma:delete}.
Thus we have that $\supp(\Bt{1} - \Bt{2}) = \supp(\B{1} - \B{2}) = \D^*$.

From this point onwards all the arguments will be w.r.t. the two SEMs $(\Bt{\kappa}, \wtl{D})$ and thus $\DO$ will denote
the difference of precision matrix over $(\Bt{1}, \wtl{D})$ and $(\Bt{2}, \wtl{D})$ having DAGs $\Gt{1}$ and $\Gt{2}$.

\paragraph{Proof of Claim (ii).} From Assumption \ref{ass:main}(i)
and Proposition \ref{prop:terminal_vertex} we have that $i$ is a terminal
vertex in $\D_G$ if and only if $(\DO)_{i,i} = 0$. From Lemma \ref{lemma:delete} we have that
removing a set of vertices does not change the topological ordering, i.e., 
$\mathcal{T}(\Gt{\kappa}) \subseteq \mathcal{T}(\G{\kappa})$, for $\kappa \in \Set{1,2}$. Therefore,
the order in which the vertices are eliminated by the function \computeOrder{} is consistent with the
topological order of the difference DAG.

\paragraph{Proof of Claim (iii).} For any two vertices $i,j$ such that $i,j \in S_a$ for $a \in [m]$, that
means $i$ and $j$ were eliminated in the same iteration and $(\DO)_{i,i} = (\DO)_{j,j} = 0$. However, if 
$(i,j) \in \D^*$ then by Assumption \ref{ass:main}(ii) $(\DO)_{j,j} \neq 0$. Therefore, $(i,j) \notin \D^*$.

\paragraph{Proof of Claim (iv).} From Assumption \ref{ass:main}(ii) we have that for any $(i,j) \in \D^*$,
$(\DO)_{i,j} \neq 0$. Also by Claim (ii) we have that the ordering $O$ is consistent with the topological 
ordering of the difference DAG $\D_G$. Therefore, we have that $\D \supseteq \D^*$.

\paragraph{Proof of Claim (v).} Let $\D'$ denote the set of edges returned by \orientEdges{}. For any $(i,j) \in \D' \setminus \D^*$
we have that $(\DO)_{i,j} \neq 0$ and $i,j \notin S_a$ for some $S_a \in O$. Then we have that
\begin{align*}
(\DO)_{i,j} = \sum_{l \in CC^{(1)}_{i,j}} \nicefrac{1}{\vart_l}({\Bt{1}_{l,i} \Bt{1}_{l,j}}) - 
	\sum_{l \in CC^{(2)}_{i,j}} \nicefrac{1}{\vart_l}({\Bt{2}_{l,i} \Bt{2}_{l,j}})
\end{align*}
where $CC^{(\kappa)}_{i,j} = \Chi{\Gt{\kappa}}(i) \intersection  \Chi{\Gt{\kappa}}(j)$
is the set of common children of $i$ and $j$ in the SEM indexed by $\kappa$.
Therefore, if we remove the nodes $CC^{(1)}_{i,j} \union CC^{(2)}_{i,j}$ from $V'$ and
compute the difference of precision matrix over $S = V' \setminus CC^{(1)}_{i,j} \union CC^{(2)}_{i,j}$,
then by Lemma \ref{lemma:delete} $(\D_{\Omega}^{S})_{i,j} = 0$. Since the nodes $CC^{(1)}_{i,j} \union CC^{(2)}_{i,j}$ are descendants
of $i$ and $j$ in the difference DAG, the function \prune{} will correctly remove the edge $(i,j)$. Thus 
if $\D$ is the set returned by \prune{} then $\D = \D^*$.
\end{proof}
\begin{proof}[Proof of Theorem \ref{thm:finite_sample}]
Note that by Assumption \ref{ass:oracle}, $\NormMax{\widehat{\D}_{\Omega}^S - \D_{\Omega}^S} \leq \varepsilon$ holds
with probability at least $1 - \delta$ simultaneously over all subsets $S \subseteq [p]$. 
Therefore, in the finite sample version given an $\varepsilon$-accurate estimate of $\DO$ and thresholding $\DO$ at $\varepsilon$, we
have that each line involving $\DO$ holds (by Assumption \ref{ass:finite}) with probability at least $1 - \delta$ simultaneously. So the claim follows.
\end{proof}
\begin{proof}[Proof of Theorem \ref{thm:cai_precision_difference}]
For the purpose of the proof, symbols superscripted by $*$ will correspond to ``true'' objects (e.g. true covariance matrix),
while symbols with a hat will denote the corresponding finite sample estimates.
Let $\Smh{} = \Smh{(1)} \otimes \Smh{(2)}$, $\Sigma^* = \Sm{1} \otimes \Sm{2}$,
$\bh = \vectorize(\Smh{1} - \Smh{2})$ and $\beta^* = \vectorize(\DOt)$,
where $\DOt$ is the true difference of precision matrices.
Then,
\begin{gather}
[\Sigma^*(\betah - \beta^*)]_i = \sum_{j} \Sigma^*_{i,j} (\betah_j - \beta^*_j) = \Sigma^*_{i,i} (\betah_i - \beta^*_i) 
	+ \sum_{j \neq i} \Sigma^*_{i,j} (\betah_j - \beta^*_j) \notag \notag \\
\implies \Abs{[\Sigma^*(\betah - \beta^*)]_i - \Sigma^*_{i,i} (\betah_i - \beta^*_i)}
	= \Abs{\sum_{j \neq i} \Sigma^*_{i,j} (\betah_j - \beta^*_j)} \qquad (\forall i \in [p])	\notag \\
\implies \kdmin \Abs{\betah_i - \beta^*_i} \Ineq[(a)]{\leq}
	\Abs{\sum_{j \neq i} \Sigma^*_{i,j} (\betah_j - \beta^*_j)} + \Abs{[\Sigma^*(\betah - \beta^*)]_i} \qquad (\forall i \in [p]) \notag \\
\implies \kdmin \NormInfty{\betah - \beta^*} \Ineq[(b)]{\leq}
	\komax \NormI{\betah - \beta^*} + \NormInfty{\Sigma^*(\betah - \beta^*)}, \label{eq:fsp_main_ineq}
\end{gather}
where (a) follows from reverse triangle inequality and (b) follows from taking max over $i$.
To upper bound $\NormInfty{\betah - \beta^*}$ we need to upper bound $\NormI{\betah - \beta^*}$ and $\NormInfty{\Sigma^*(\betah - \beta^*)}$.
Next, we will upper bound $\NormI{\betah - \beta^*}$.

To upper bound $\NormI{\betah - \beta^*}$ \textbf{assume that $\beta^*$ is feasible (for which we will provide a proof at the end).}
Thus we have that $\NormI{\betah} \leq \NormI{\beta^*}$. Let $S$ be the support of $\beta^*$, i.e., 
$S \defeq \Set{i \in [p] \mid \beta^*_{i} \neq 0}$. Let $\Sc$ be the complement of the set $S$.
Then
\begin{gather}
\NormI{\betah_S} + \NormI{\betah_{\Sc}} \leq \NormI{\beta^*_S} \notag \\
\implies \NormI{\betah_{\Sc}} \leq \NormI{\beta^*_S} - \NormI{\betah_S} \leq \NormI{\beta^*_S - \betah_S} \label{eq:fsp_l1_ineq1}
\end{gather}
From the above and the fact that $\beta^*_{\Sc} = 0$ we have
\begin{gather}
\implies \NormI{\beta^* - \betah} \leq \NormI{\beta^*_S - \betah_S} + \NormI{\beta^*_{\Sc} - \betah_{\Sc}} \leq 
	2 \NormI{\beta^*_S - \betah_S} \label{eq:fsp_l1_ineq2}
\end{gather}
Next, let $x = \betah - \beta^*$ and $\xb_S = (\xb_i)_{i \in [p]}$ such that
\begin{align*}
\xb_i = \left \{ 
\begin{array}{cc}
x_i & i \in S \\
0 & \mathrm{otherwise}
\end{array}
\right. .
\end{align*}
Next we have that
\begin{align}
\Abs{\Tp{\xb_S} \Sigma^* x} &= \Abs{\Tp{\xb_S} \Sigma^* \xb_S + \Tp{\xb_S} \Sigma^* \xb_{\Sc}} \notag \\
	& \geq \Abs{\Tp{\xb_S} \Sigma^* \xb_S} - \Abs{\Tp{\xb_S} \Sigma^* \xb_{\Sc}} \notag  \\
	& \geq \eigmax(\Sigma^*) \NormII{x_S}^2 - \Abs{\sum_{i \in S} \sum_{j \in \Sc} \Sigma^*_{i,j} x_i x_j} \notag  \\
	& \geq \eigmax(\Sigma^*) \NormII{x_S}^2 - \komax \sum_{i \in S} \sum_{j \in \Sc} \Abs{x_i x_j} \notag \\
	&= \eigmax(\Sigma^*) \NormII{x_S}^2 - \komax \NormI{x_S} \NormI{x_{\Sc}} \notag  \\
	&\Ineq[(c)]{\geq} \eigmax(\Sigma^*) \NormII{x_S}^2 - \komax \NormI{x_S}^2, \label{eq:fsp_temp1}
\end{align}
where (c) follows from the fact that $\NormI{x_{\Sc}} = \NormI{\betah_{\Sc} - \beta^*_{\Sc}} \leq \NormI{x_S}$ \eqref{eq:fsp_l1_ineq1}.
We also have the following upper bound:
\begin{align}
\Abs{\Tp{\xb_S} \Sigma^* x} \leq \NormI{\xb_S} \NormInfty{\Sigma^* x} \leq \sqrt{\Abs{S}} \NormII{x_S} \NormInfty{\Sigma^* x}
	\label{eq:fsp_temp2}.
\end{align}
From \ref{eq:fsp_temp1} and \eqref{eq:fsp_temp2} and under condition $\komax \leq \frac{\eigmin(\Sm{1}) \eigmin(\Sm{2})}{2 \Norm{\DO}_0}$
 we have that:
\begin{align*}
\NormI{x_S} = \NormI{\betah_S - \beta^*_S} \leq \frac{\NormInfty{\Sigma^* (\betah - \beta^*)}}{\komax}.
\end{align*}
Combining the above with \eqref{eq:fsp_l1_ineq2} and \eqref{eq:fsp_main_ineq} we get the following upper bound:
\begin{align}
\NormI{\betah - \beta^*} \leq \frac{3\NormInfty{\Sigma^* (\betah - \beta^*)}}{\kdmin}. \label{eq:fsp_l1_bound}
\end{align}

Next, we will upper bound $\NormInfty{\Sigma^* (\betah - \beta^*)}$.
\begin{align*}
\NormInfty{\Sigma^* (\betah - \beta^*)} 
	&= \NormInfty{ \Sigma^* \betah - b^* - (\Sigma^* \beta^* - b^*)} \\
	&\Ineq[(d)]{=} \NormInfty{ (\Sigma^* - \Smh{}) \betah + \Smh{} \betah - \bh + \bh - b^*} \\
	&\Ineq[(e)]{\leq} \lambda_n + \NormInfty{\bh - b^*} + \NormInfty{(\Sigma^* - \Smh{}) \betah} \\
	&\Ineq[(f)]{\leq} \lambda_n + \NormInfty{\bh - b^*} + \NormMax{\Sigma^* - \Smh{}} \NormI{\beta^*} \\
	&\Ineq[(g)]{\leq} 2 \lambda_n,
\end{align*}
where in (d) we used the fact that $\Sigma^* \beta^* - b^* = 0$, (e) follows from the fact that $\betah$
is the solution to the optimization problem \eqref{eq:opt} and triangle inequality, (f) follows from
the feasibility of $\beta^*$ for the optimization problem \eqref{eq:opt} and Cauchy-Schwartz inequality, and (g) follows
from the assumption that $\lambda_n \geq \NormInfty{\bh - b^*} + \NormMax{\Sigma^* - \Smh{}} \NormI{\beta^*}$.
Thus from \eqref{eq:fsp_l1_bound} and above, we get the following bound on the estimation error:
\begin{align}
\NormInfty{\beta^* - \betah} \leq \frac{6 \lambda_n}{\kdmin} \label{eq:fsp_final_bound}. 
\end{align}

Next, we will use concentration of Gaussian covariance matrix results from \cite{ravikumar2011high} to bound
$\NormInfty{\bh - b^*}$ and $\NormMax{\Sigma^* - \Smh{}}$. Note that
\begin{align*}
\NormInfty{\bh - b^*} = \NormMax{ \Smh{(1)} - \Smh{(2)} + \Sm{1} - \Sm{2}} \leq 
	2 \max_{\kappa \in \Set{1,2}} \NormMax{ \Smh{(\kappa)} - \Sm{\kappa}},
\end{align*}
where $\Sm{1}$ and $\Sm{2}$ are the true covariance matrices corresponding to the true SEMs.
From Lemma 1 of \cite{ravikumar2011high} and 
a union bound over $p^2$ entries of each $\Smh{1}$ and $\Smh{2}$, we have that with probability at least $1 - \delta$,
for some $\delta \in (0, 1)$:
\begin{align}
\NormInfty{\bh - b^*} \leq 2 \sqrt{\frac{c}{n} \log \frac{4p^2}{\delta}},
\end{align}
where $c = 3200 \max_{\kappa \in \Set{1,2}} \max_{i \in [p]} (\Sm{\kappa}_{i,i})^2$.
Next, we will bound $\NormMax{\Sigma^* - \Smh{}}$. 
\begin{align*}
\NormMax{\Sigma^* - \Smh{}} &= \max_{(a,b,c,d) \in [p]^4} \Abs{(\Sm{1}_{a,b}) (\Sm{2}_{c,d}) - (\Smh{(1)}_{a,b}) (\Smh{(2)}_{c,d})} \\
	&= \max_{(a,b,c,d) \in [p]^4}  \Abs{ \Sm{2}_{c,d}(\Sm{1}_{a,b} - \Smh{1}_{a,b}) +  
		\Smh{1}_{a,b} (\Sm{2}_{c,d} - \Smh{2}_{c,d}) }  \\
	&\leq \Abs{\Sm{2}_{c,d}} \Abs{\Sm{1}_{a,b} - \Smh{(1)}_{a,b}} +
		\Abs{\Sm{1}_{a,b}} \Abs{\Sm{2}_{c,d} - \Smh{(2)}_{c,d}} +
		\Abs{\Sm{1}_{a,b} - \Smh{(1)}_{a,b}} \Abs{\Sm{2}_{c,d} - \Smh{(2)}_{c,d}}
\end{align*}
From Lemma 1 of \cite{ravikumar2011high} we have that with probability at least $1 - 2 \delta$ the
following hold:
\begin{align*}
\Abs{\Sm{1}_{a,b} - \Smh{(1)}_{a,b}} &\leq \sqrt{\frac{c_1}{n} \log \frac{4}{\delta}} \\
\Abs{\Sm{2}_{c,d} - \Smh{(2)}_{c,d}} &\leq \sqrt{\frac{c_2}{n} \log \frac{4}{\delta}} ,
\end{align*}
where $c_{\kappa} = 3200 \max_{i \in [p]} (\Sm{\kappa}_{i,i})^2$. Since, $n \geq \log (\nicefrac{4}{\delta})$
and taking a union bound over $2p^2$ entries of the empirical covariance matrices we get that with 
probability $1 - \delta$, for some $\delta \in (0,1)$:
\begin{align*}
\NormMax{\Smh{} - \Sigma^*} \leq c' \sqrt{\frac{c}{n} \log \frac{8p^2}{\delta}},
\end{align*}
where $c' = \sqrt{c} + \NormMax{\Sm{1}} + \NormMax{\Sm{2}}$. This implies the lower bound on
$\lambda_n$ is given as follows:
\begin{align*}
\lambda_n \geq 2 \sqrt{\frac{2c}{n} \log \frac{2p^2}{\delta}} + c' \NormI{\beta^*} \sqrt{\frac{3c}{n} \log \frac{2p^2}{\delta}}
	\geq C \sqrt{\frac{1}{n} \log \frac{2p}{\delta}},
\end{align*}
where the constant $C$ is given in the statement of the theorem.
Setting the estimation error to be at most $\varepsilon$ implies the following upper bound on $\lambda_n$:
\begin{align*}
\lambda_n \leq \frac{\kdmin \varepsilon}{6}.
\end{align*}
Setting $n$ as given in the theorem ensures that the lower bound in less than the upper bound.

Lastly, we show that the $\beta^*$ is feasible. We have the following:
\begin{align*}
\NormInfty{\Smh{} \beta^* - \bh} 
	&=  \NormInfty{(\Smh{} - \Sigma^*) \beta^* - (\bh - b^*) + \Sigma^* \beta^* - b^*} \\
	&\leq \NormMax{\Smh{} - \Sigma^*} \NormI{\beta^*} + \NormInfty{\bh - b^*} \\
	&\leq \lambda_n,
\end{align*}
where the second line follows from the fact that $\Sigma^* \beta^* - b^* = 0$ and the triangle inequality,
and the last line follows from the assumption on $\lambda_n$.
\end{proof}
\begin{proof}[Proof of Corollary \ref{cor:finite_sample}]
To prove the corollary, we just need to show that for any subset $S \subseteq [p]$ estimating
the difference of precision matrix $\D_{\Omega}^S$ using \eqref{eq:opt}, satisfies Assumption \ref{ass:oracle}.
For any subset $S \subseteq [p]$ denote the corresponding covariance matrices by $\Sm{\kappa,S}$, for $\kappa \in \Set{1,2}$.
Let
$\komaxS \defeq \max \Set{ \Abs{ \Sm{1,S}_{i,j} \Sm{2,S}_{k,l}} \mid i,j,k,l \in S, \, (i,j) \neq (k,l)}$
and $\kdminS \defeq \min \Set{ \Sm{1,S}_{i,i} \Sm{2,S}_{i,i} \mid i \in [S]}$. Since, for any $S$
$\eigmin(\Sm{\kappa,S}) \geq \eigmin(\Sm{\kappa})$ for $\kappa \in \Set{1,2}$, and $\komaxS \leq \komax$, we have:
\begin{align*}
\komaxS \leq \komax \Ineq[(a)]{\leq} \frac{\eigmin(\Sm{1}) \eigmin(\Sm{2})}{2d} \leq 
	\frac{\eigmin(\Sm{1,S}) \eigmin(\Sm{2,S})}{2\Norm{\D_{\Omega}^S}_0},
\end{align*}
where $(a)$ follows from the Assumption in the corollary. Therefore, we have that estimating
$\D_{\Omega}^S$ using \eqref{eq:opt} satisfies the incoherence condition (Theorem \ref{thm:cai_precision_difference}) for each $S$. Next,
since the constant $C = \BigO{d}$ for any set $S$, we have that the condition on the number
of samples and regularization parameter (Theorem \ref{thm:cai_precision_difference}) are satisfied as well.
Next from Lemma 1 \cite{ravikumar2011high} we have 
that with probability at least $1 - \delta$ we have that simultaneously for both $\kappa \in \Set{1,2}$:
\begin{align*}
\NormMax{\Sm{\kappa} - \Smh{(\kappa)}} \leq \sqrt{\frac{c}{n} \log \frac{4p^2}{\delta}},
\end{align*}
where $c = 3200 \max_{\kappa \in \Set{1,2}} \max_{i \in [p]} (\Sm{\kappa}_{i,i})^2$. Thus, we have that
with probability at least $1 - \delta$ and simultaneously for all $S \subseteq [p]$ and $\kappa \in \Set{1,2}$
\begin{align*}
\NormMax{\Sm{\kappa,S} - \Smh{(\kappa,S)}} \leq \sqrt{\frac{c}{n} \log \frac{4p^2}{\delta}}.
\end{align*}
From the proof of Theorem \ref{thm:cai_precision_difference} it is clear that using $\Smh{(\kappa)}$ in \eqref{eq:opt}
such that $\NormMax{\Sm{\kappa} - \Smh{(\kappa)}} \leq \sqrt{\frac{c}{n} \log \frac{4p^2}{\delta}}$, we get an
estimate $\widehat{\D}_{\Omega}^S$, by solving \eqref{eq:opt}, which satisfies 
$\NormMax{\widehat{\D}_{\Omega}^S - \D_{\Omega}^S} \leq \varepsilon$. Combined with the fact that for each $S$
the condition required as per Theorem \ref{thm:cai_precision_difference} for $n$, $\lambda_n$, and $\komaxS$
are satisfied, we get that the finite sample algorithm that uses \eqref{eq:opt} to estimate the difference of
precision (sub-)matrices satisfies Assumption \ref{ass:oracle}. Thus, the final claim follows as per Theorem \ref{thm:finite_sample}.
\end{proof}
\begin{proof}[Proof of Theorem \ref{thm:lower_bound}]
Given two \emph{graphs} $G = ([p], E)$ and $G' = ([p], E')$, let $G \oplus G' \defeq ([p], (E \setminus E') \union (E' \setminus E))$
denote the graph obtained by taking edges exclusive to the graphs $G$ and $G'$ and removing edges common to them.
Let $\Gf$ be the set of DAGs over $[p]$ variables and for any DAG $G \in \Gf$ let $\DGf(G)$ be the set of directed ``difference'' graphs
having the same causal order as $G$. Let $\DGf \defeq \Set{\DGf(G) \mid G \in \Gf}$. Given a DAG $G$, let $P_G$ denote the
distribution induced by the Gaussian linear SEM $(B(G), D)$ where the edge weight matrix $B(G)$ is given as follows:
\begin{align*}
B(G)_{i,j} = \left\{
\begin{array}{ll}
\nicefrac{1}{\sqrt{\Abs{\nPar(i)}}} & j \in \nPar(j), \\
0 & \text{otherwise}
\end{array}
\right.
\end{align*}
and $D = \var I_p$, where $I_p$ is the $p \times p$ identity matrix. 
Thus, given a DAG $G$ the distribution over the variables $\Set{X_1, \ldots, X_p}$
is uniquely defined. Let $\Pf(\Gf, \DGf) = \Set{P_{\G{1}} \times P_{\G{1} \oplus \D_G} \mid \G{1} \in \Gf, \D_G \in \DGf(\G{1})}$ 
be the set of distributions  corresponding to the graph families $\Gf$ and $\DGf$. Since $\Pf(\Gf, \DGf) \subset \Pf$ and
finite, the minimax error is lower bounded as follows:
\begin{align*}
\perr \geq \inf_{\Dec} \max_{P \in \Pf(\Gf, \DGf)} \Prob[\Xf \sim P^n]{\Dec(X) \neq \D_G(P)}.
\end{align*}

Let $\Gfr \subseteq \Gf$ and
$\forall G \in \Gfr$ $\DGfr(G) \subseteq \DGf(G)$. Also, let $\DGfr = \Set{\DGfr(G) \mid G \in \Gfr}$.
Since $\Pf(\Gfr, \DGfr) \subseteq \Pf(\Gf, \DGf)$, the minimax error is further lower bounded
as follows:
\begin{align*}
\perr \geq \inf_{\Dec} \max_{P \in \Pf(\Gfr, \DGfr)} \Prob[\Xf \sim P^n]{\Dec(\Xf) \neq \D_G(P)}.
\end{align*}
We will construct the restricted ensembles $\Gfr$ and $\DGfr$ as follows. Each $G = ([p], E) \in \Gfr$ is a 
fully connected directed bipartite graph. That is, we partition the set $[p]$ in to $U$ and $V$
such that $[p] = U \union V$, $U \intersection V = \emset$, $\Abs{U} = \ceil{\nicefrac{p}{2}}$ and $\Abs{V} = \floor{\nicefrac{p}{2}}$.
The edge set $E = \Set{(v, u) \mid u \in U \text{ and } v \in V }$. Note that $(v, u)$ denotes the directed edge $v \leftarrow u$.
We will denote the graph $G$ by $(V, U, V \times U)$.
For any $G = (V, U, V \times U) \in \Gfr$, a DAG $G' = (V, U, E) \in \DGfr$ is generated as follows. For each node $v \in V$
we randomly pick a subset $U(v) \subset U$ such that $\Abs{U(v)} = \dt$ and set the parents of $v$ to be $U(v)$. Therefore, the edge
set $E' = \Set{(v, u) \mid v \in V, u \in U(v), U(v) \subset U, \Abs{U(v)} = \dt}$. Thus, the graph $G'$ is $\dt$-sparse. Note that the
graph $G \oplus G'$ is a fully connected bipartite DAG from which the edges in $G'$ have been deleted. 

Note that for any $G \in \Gfr$, $\Abs{\DGfr(G)} = {\ceil{\frac{p}{2}} \choose \dt}^{\floor{\frac{p}{2}}}$.

We will be using the following results from \cite{ghoshal2017information}.

\begin{theorem}[Generalized Fano's inequality Theorem 1 of \cite{ghoshal2017information}]
\label{thm:fano}
Let $W, X,$ and $Y$ be random variables such that the conditional independence relationship
between them are given by the following partially directed graph:
\begin{center}
\includegraphics[width=0.2\linewidth]{tikzfig_graph1}
\end{center}
Let $\what{X}$ be any estimator of $X$. Then,
\begin{align}
        \Prob{X \neq \what{X}} \geq 1 - \frac{\MI(Y ; X | W) + \log 2}{H(X|W)}. \label{eq:thm_fano}
\end{align}
\end{theorem}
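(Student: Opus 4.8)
The plan is to obtain \eqref{eq:thm_fano} by combining a conditional Fano inequality with the data-processing inequality dictated by the graph. Let $\mathcal{X}$ be the finite range of $X$, set $\perr \defeq \Prob{X \neq \what{X}}$, and let $E \defeq \Ind{X \neq \what{X}}$ be the error indicator. Conditioning every information quantity on $W$, the chain rule for mutual information gives
\begin{align*}
H(X \mid W) = \MI(X ; \what{X} \mid W) + H(X \mid \what{X}, W).
\end{align*}
Thus it suffices to control the two terms on the right separately: a Fano step bounds $H(X\mid\what{X},W)$ from above, and a data-processing step bounds $\MI(X;\what{X}\mid W)$ by $\MI(Y;X\mid W)$; substituting both and solving for $\perr$ will reproduce \eqref{eq:thm_fano}.

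First I would prove the conditional Fano bound $H(X\mid\what{X},W)\le \log 2 + \perr\,H(X\mid W)$. Expanding $H(X,E\mid\what{X},W)$ by the chain rule in two ways and using that $E$ is a deterministic function of $(X,\what{X})$ gives $H(X\mid\what{X},W) = H(E\mid\what{X},W) + H(X\mid E,\what{X},W)$, where $H(E\mid\what{X},W)\le\log 2$, the $E=0$ branch of $H(X\mid E,\what{X},W)$ vanishes because $X=\what{X}$ there, and the $E=1$ branch contributes $\perr\,H(X\mid E=1,\what{X},W)$. For each realization $W=w$ the usual counting argument bounds $H(X\mid E=1,\what{X},W=w)\le\log(\Abs{\mathcal{X}_w}-1)$, where $\mathcal{X}_w$ is the conditional support of $X$; since the construction makes $X$ conditionally uniform on $\mathcal{X}_w$ (so that $H(X\mid W=w)=\log\Abs{\mathcal{X}_w}$) with $\Abs{\mathcal{X}_w}$ constant in $w$, averaging over $W$ and using $\Exp[W]{\perr(W)}=\perr$ yields the claimed bound.

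Next I would read the Markov structure off the partially directed graph: conditioned on $W$, the estimator $\what{X}$ depends on $X$ only through the observation $Y$, i.e.\ $\what{X}\independent X\mid(Y,W)$, so $X - Y - \what{X}$ is a Markov chain given $W$. The conditional data-processing inequality then gives $\MI(X;\what{X}\mid W)\le\MI(X;Y\mid W)=\MI(Y;X\mid W)$. Substituting both bounds into the chain-rule identity produces
\begin{align*}
H(X\mid W) \leq \MI(Y;X\mid W) + \log 2 + \perr\,H(X\mid W),
\end{align*}
and solving for $\perr$ gives exactly \eqref{eq:thm_fano}. The main obstacle is the passage $\log(\Abs{\mathcal{X}_w}-1)\le H(X\mid W)$: replacing the cardinality term $\log\Abs{\mathcal{X}}$ of the textbook Fano inequality by the conditional entropy $H(X\mid W)$ is precisely what earns this bound the adjective \emph{generalized}, and it rests on correctly extracting the conditional-uniformity and the Markov content encoded in the graph hypotheses rather than on any lengthy computation.
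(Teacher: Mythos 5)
The paper contains no proof of this statement to compare against: Theorem~\ref{thm:fano} is imported verbatim from \cite{ghoshal2017information} and used as a black box inside the proof of Theorem~\ref{thm:lower_bound}. Measured against the standard argument (which is also how the cited source proceeds), your route is the expected one: the chain rule $H(X\mid W)=\MI(X;\what{X}\mid W)+H(X\mid\what{X},W)$, a conditional Fano bound on the second term via the error indicator $E=\Ind{X\neq\what{X}}$, and the conditional data-processing inequality $\MI(X;\what{X}\mid W)\le\MI(X;Y\mid W)$, which is legitimate here because the decoder sees only the data, so $\what{X}$ is a function of $Y$ and $X-Y-\what{X}$ is a Markov chain given $W$. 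Your bookkeeping in the Fano step (the $E=0$ branch vanishing, the $E=1$ branch weighted by $\perr$, averaging over $W$) is sound.

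The one genuine gap is the step you yourself single out, and your resolution of it is not valid as written: conditional uniformity of $X$ given $W$ (with conditional support of constant size) is \emph{not} something you can ``extract from the graph hypotheses.'' A conditional-independence graph constrains the factorization of the joint law, never the shape of the conditional marginals, so uniformity must appear as an explicit hypothesis of the theorem --- and it is genuinely needed, because without it \eqref{eq:thm_fano} is false. Take $W$ trivial, $Y\independent X$, and let $X=0$ with probability $\nicefrac{1}{2}$ and otherwise be uniform on $\Set{1,\ldots,M}$. Then $\MI(Y;X\mid W)=0$ and $H(X\mid W)=\log 2+(\nicefrac{1}{2})\log M$, so the right-hand side of \eqref{eq:thm_fano} is $1-\log 2/(\log 2+(\nicefrac{1}{2})\log M)\to 1$ as $M\to\infty$, while the constant decoder $\what{X}\equiv 0$ errs with probability exactly $\nicefrac{1}{2}$. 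In the cited source the uniformity is part of the ensemble construction, and in this paper's application it does hold, since $\D_G$ is drawn uniformly from $\DGfr(G)$ with $\Abs{\DGfr(G)}={\ceil{\nicefrac{p}{2}}\choose d}^{\floor{\nicefrac{p}{2}}}$ independent of $G$; so your proof is correct for the theorem as actually used, but the uniformity (and constancy of $\Abs{\mathcal{X}_w}$) must be assumed, not derived from the figure.
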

Nature generates a DAG $G$ uniformly at random from the family $\Gfr$ and then generates a difference DAG $\D_G$
uniformly at random from the family $\DGfr(G)$. The minimax error can further be lower bounded as follows:
\begin{align}
\perr &\geq \inf_{\Dec} \max_{P \in \Pf(\Gfr, \DGfr)} \Prob[\Xf \sim P^n]{\Dec(\Xf) \neq \D_G(P)} \notag \\
	&\geq \inf_{\Dec} \Exp[P]{\Prob{\Dec(\Xf) \neq \D_G(P)}} \notag \\
	&= \inf_{\Dec} \Exp[G]{\Exp[\D_G]{\Prob{\Dec(\Xf) \neq \D_G}} \mid G}, \label{eq:perr_lb} \\	
\end{align}
where in the second line the probability is over both $\Xf \sim P^n$ and $P$ drawn from the family $\Pf(\Gfr, \DGfr)$.
Next, we lower bound $\Prob[\Xf \sim P_{G}^n \times P_{G \oplus \D_G}^n]{\Dec(\Xf) \neq \D_G}$ using Theorem \ref{thm:fano}.
Towards that end we need to first upper bound the mutual information $\MI(\Xf ; \D_G \mid G)$ and compute the conditional entropy
$H(\D_G \mid G)$. Let $Q = \mathcal{N}(0, I)$ be the standard isotropic Gaussian distribution over $X$. We upper bound $\MI(\Xf ; \D_G \mid G)$
by adapting Lemma 4 from \cite{ghoshal2017information} for our purpose which we state below.
\begin{lemma}[Lemma 4 of \cite{ghoshal2017information}]
Let $P_{G, \D_G} = P_G \times P_{G \oplus \D_G}$ denote the data distribution 
given a specific DAG $G$ and a specific difference DAG $D_G$. Then,
\begin{align*}
\MI(\Xf; \D_G \mid G) \leq \frac{1}{\Abs{\DGfr(G)}} \sum_{\D_G \in \DGfr(G)} \KL{P_{G, \D_G}^n}{(Q')^n},
\end{align*}
where $Q' = \mathcal{N}(0, I)$ is the standard $2p$-dimensional isotropic Gaussian distribution.
\end{lemma}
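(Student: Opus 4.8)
The plan is to prove Lemma 4 as an instance of the standard variational upper bound on mutual information: the information between the data and a uniformly chosen mixture component is at most the average KL divergence from each component to \emph{any} fixed reference distribution, and the freedom to choose the reference is exactly what lets us swap in the clean isotropic Gaussian $Q'$.

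First I would, for the fixed DAG $G$, write out the conditional mutual information explicitly. Since $\D_G$ is drawn uniformly from $\DGfr(G)$ and $\Xf \mid \D_G \sim P_{G,\D_G}^n$, the marginal law of $\Xf$ given $G$ is the mixture $\bar{P}^n \defeq \frac{1}{\Abs{\DGfr(G)}} \sum_{\D_G \in \DGfr(G)} P_{G,\D_G}^n$, and the definition of mutual information gives the exact identity
\begin{align*}
\MI(\Xf; \D_G \mid G) = \frac{1}{\Abs{\DGfr(G)}} \sum_{\D_G \in \DGfr(G)} \KL{P_{G,\D_G}^n}{\bar{P}^n}.
\end{align*}

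Next I would invoke the elementary decomposition that, for any reference $Q$, any weights $w_i$ summing to one, and mixture $\bar{P} = \sum_i w_i P_i$, one has $\sum_i w_i \KL{P_i}{Q} = \sum_i w_i \KL{P_i}{\bar{P}} + \KL{\bar{P}}{Q}$; this follows by splitting the integrand $\log(P_i/Q) = \log(P_i/\bar{P}) + \log(\bar{P}/Q)$ and integrating, the second term contributing $\KL{\bar{P}}{Q}$ after summing against the weights. Since $\KL{\bar{P}}{Q} \geq 0$, dropping it yields $\sum_i w_i \KL{P_i}{\bar{P}} \leq \sum_i w_i \KL{P_i}{Q}$. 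Applying this with $Q = (Q')^n$ and with $P_i$ ranging over the components $P_{G,\D_G}^n$ replaces the intractable mixture $\bar{P}^n$ in the previous display by the fixed reference $(Q')^n$ and produces exactly the claimed inequality.

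I do not expect any genuine obstacle here, since the entire content is the decomposition identity followed by discarding a nonnegative term. The one point worth emphasizing is conceptual rather than technical: the reference $Q'$ is completely free, so it need not respect the bipartite SEM structure at all, and in particular the product form of $(Q')^n$ plays no role in establishing this inequality. That product form matters only downstream, when one actually evaluates $\KL{P_{G,\D_G}^n}{(Q')^n}$ and exploits the factorization across the $n$ i.i.d. samples and the $2p$ coordinates to obtain a closed form suitable for the subsequent Fano computation in Theorem \ref{thm:lower_bound}.
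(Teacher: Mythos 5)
Your proof is correct: writing the conditional mutual information (for fixed $G$) as the average KL divergence from each component $P_{G,\D_G}^n$ to the mixture, then applying the compensation identity and discarding the nonnegative term $\KL{\bar{P}}{Q}$, is exactly the standard variational argument on which this lemma rests, and your observation that the reference $(Q')^n$ is arbitrary is accurate. Note that the paper itself gives no proof of this statement---it imports it verbatim as Lemma 4 of the cited reference---so your self-contained derivation is precisely the argument the citation stands in for, and it coincides with the standard proof of that result.
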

From the above we have that
\begin{align*}
\MI(\Xf; \D_G \mid G) &\leq \frac{1}{\Abs{\DGfr(G)}} \sum_{\D_G \in \DGfr(G)} \KL{P_{G, \D_G}^n}{(Q')^n}  \\
	&=  \frac{n}{\Abs{\DGfr(G)}} \sum_{\D_G \in \DGfr(G)} \KL{P_{G}}{Q} + \KL{P_{G \oplus \D_G}}{Q}
\end{align*}
Note that the distribution indexed by a DAG $G$ is $P_{G} = \mathcal{N}(0, \Sigma(G))$, with $\Sigma(G) = (I - B(G))^{-1} D \Tpn{(I - B(G))}$.
$Q$ is the $p$-dimensional standard isotropic Gaussian distribution. From the KL-divergence
characterization for multivariate Gaussian distribution we have that:
\begin{align*}
\KL{P_{G}}{Q} = \frac{1}{2} \left \{ \trace(\Sigma(G)) -p - \ln \Abs{\Sigma(G)} \right\}
\end{align*}
For any $u \in U$, $\Var{X_u} = \var$, while for a $v \in V$, $\Var{X_v} = (\nicefrac{1}{\sqrt{\Abs{U}}}) \sum_{u \in U} X_u + \var = 2 \var$
since $X_u$'s are independent for all $u \in U$. Therefore, 
$\trace(\Sigma(G)) = \ceil{\nicefrac{p}{2}} \var + 2 \floor{\nicefrac{p}{2}} \var \leq (\nicefrac{3}{2}) p\var$.
Next,
\begin{align*}
\Abs{\Sigma(G)} &= \Abs{(I - B(G))^{-1} D \Tpn{(I - B(G))}}  \\
	&= \frac{1}{\Abs{(I - B(G)) D^{-1} \Tp{(I - B(G))} }} = \frac{1}{\Abs{D^{-1}}} = \Abs{D} = (\var)^p
\end{align*}
Setting $\var = \nicefrac{2}{3}$, we have that 
$\KL{P_{G}}{Q} = \nicefrac{1}{2} \{ p - p + p \ln \nicefrac{3}{2} \} < (\nicefrac{p}{2}) \ln \nicefrac{3}{2}$.
Since each vertex in the DAG $\D_G$ has exactly $p - \dt$ parents, $\KL{P_{G \oplus \D_G}}{Q} \leq (\nicefrac{p}{2}) \log 2$. 
Therefore,  $\MI(\Xf; \D_G \mid G) \leq np \ln \nicefrac{3}{2} < \nicefrac{np}{2}$.

Finally, since $\D_G$ is picked uniformly at random from the set $\DGfr(G)$ and $G$ itself is also picked uniformly at random from $\DGf$, we have that $H(\D_G \mid G) = \floor{\nicefrac{p}{2}} \log {\ceil{\nicefrac{p}{2}} \choose \dt}
\leq (\nicefrac{p\dt}{2}) \log (\nicefrac{p}{2\dt})$.
Therefore, from \eqref{eq:perr_lb} and Theorem \ref{thm:fano} we have that
\begin{align*}
\perr \geq 1 - \frac{\nicefrac{np}{2} + \log 2}{(\nicefrac{p\dt}{2}) \log (\nicefrac{p}{2\dt})}.
\end{align*}
From the above we have that $\perr \geq \frac{1}{2}$ if $n \leq \frac{\dt}{2} \log \frac{p}{2\dt} - \frac{2}{p} \log 2$.
\end{proof}

\section{Comparison with indirect methods}
\label{app:comparison}
\citet{ghoshal2018learning} have obtained optimal sample complexity guarantees for estimating linear SEMs with equal noise variance. Therefore, we compare our method against using their
method to first estimate the individual DAGs and then compute
their differences. Using the method of \citet{ghoshal2018learning}
would require at least 
\begin{align*}
    \min(n_1, n_2) \geq 2 (\nicefrac{C_1 C}{\varepsilon})^2 
        \log (\nicefrac{2p}{\sqrt{\delta}}),
\end{align*}
samples to estimate the individual DAGs with probability at least $2\delta$, where the constant $C_1 = \BigO{\max_{\kappa, i}
\Sm{\kappa}_{i,i}}$ and 
$C = \BigO{\max_{\kappa}\NormI{\Om{\kappa}}^2}$. Therefore, the
sample complexity of indirectly estimating the difference DAG 
depends on $(\max_{\kappa, i} \Sm{\kappa}_{i,i} \NormI{\Om{\kappa}}^2 )^2$. In comparison, the sample complexity of our direct estimation method depends on $(\Abs{\DO}_1 \max_{\kappa, i} \NormMax{\Sm{\kappa}} \Sm{\kappa}_{i,i})^2$.
Note that our sample complexity also depends on $1/\kdmin$,
where $\kdmin = \min_{i} \Sm{1}_{i,i}\Sm{2}_{i,i}$. 
Since $\Sm{1}_{i,i} \geq \sigma_i^2$, $\kdmin$ can be ignored as a
constant. While our method requires a mutual incoherence condition, their method for estimating the individual DAGs require the noise
variances to be the same (or close to being the same). Therefore,
if the incoherence condition is satisfied, our method has strictly
lower sample complexity which depends only on the sparsity of
the difference between the moral graphs of the two SEM.

\section{Additional Experiments}

\subsection{Logarithmic dependence on the number of variables}
\label{app:log_phase}
In this section, we show the $\BigO{\log p}$ dependence of the sample complexity, as prescribed by Corollary \ref{cor:finite_sample} and Theorem \ref{thm:lower_bound}.
In Figure \ref{fig:log_phase}, we perform 30 repetitions to estimate the probability of exact recovery, where the SEM pairs are sampled according to Section \ref{sec:exps}.
Then, for each pair, we obtain $\floor{e^C \log p}$ samples for $C \in [3,12]$ and $p \in \{5,10,15\}$.
\begin{figure}[!ht]
\centering
\includegraphics[width=0.45\linewidth]{log_phase.png} 
\caption{Probability of exact recovery computed across 30 repetitions.
The number of samples is set to $\floor{e^C \log p}$ for $C \in [3,12]$ and $p \in \{5,10,15\}$.
The x-axis is in log-scale.
We can observe the logarithmic dependence on $p$, as prescribed by Corollary \ref{cor:finite_sample} and Theorem \ref{thm:lower_bound}.
}
\label{fig:log_phase}
\end{figure}

\subsection{Real-world experiment}
\label{app:real_world_exp}
In this section we test our algorithm in the medical domain.
The 1000 functional connectomes dataset contains resting-state fMRI of 1128 subjects collected on 41 sites around the world. 
The dataset is publicly available at \url{http://www.nitrc.org/projects/fcon_1000/}. 
Resting-state fMRI is a procedure that captures brain function of a subject that is at wakeful rest (i.e., not focused on the outside world). 
Registration of the dataset to the same spatial reference template (Talairach space) and spatial smoothing was performed in SPM (\url{http://www.fil.ion.ucl.ac.uk/spm/}). 
We extracted voxels from the gray matter only, and grouped them into 157 regions (i.e., $p=157$) by using standard labels, given by the Talairach Daemon (\url{http://www.talairach.org/}). 
These regions span the entire brain: cerebellum, cerebrum and brainstem. 
In order to capture laterality effects, we have regions for the left and right side of the brain.

A relevant neuroscientific aim is to discover the changes in the default mode network (which is active during wakeful rest) in subjects who had the eyes open, versus subjects who had the eyes closed. 
This is known to make a significant difference in brain activity for activity-oriented tasks, but this is unclear in resting state. 
Our method recovered a difference DAG between brain regions that belong to the visual cortex (ventral and dorsal visual streams) in the back of the brain (see Figure \ref{fig:real_world_exp}). 
Thus, besides having strong theoretical guarantees, our method has the potential to also produce meaningful results in practice.
\begin{remark}
We emphasize that our goal here is to demonstrate the practicality of our method.
It is beyond the scope of this work to make scientific discoveries in neuroscience.
\end{remark}
\vspace{-0.08in}
\begin{figure}[!ht]
    \centering
    \includegraphics[width=0.75\linewidth]{real_world.png}
    \caption{
    Out of the 157 variables, only 12 variables participate in the difference DAG, i.e., the remaining 145 variables are disconnected.
    In this case, $\epsilon$ was set to $0.1$ for our finite-sample version of Algorithm \ref{alg:exact_alg} (see Remark \ref{remark:epsilon}).
    The red edges indicate important connections discovered.
    The visual cortex (nodes colored in green) includes: Brodmann areas 17, 19, 20.
    We note that Dentate is partially involved in the visuospatial function.
    Lateral Posterior Nucleus is also involved in vision, while Culmen is believed to mediate in visual reflexes.
    }
    \label{fig:real_world_exp}
\end{figure}

\end{appendix}

\end{document}